\documentclass{llncs}
\usepackage{amsmath}
\usepackage{amssymb}
\usepackage{graphicx}
\usepackage{color}
\usepackage{epstopdf}
\usepackage{subfigure}
\DeclareGraphicsRule{.pdftex}{pdf}{.pdftex}{}
\newcommand{\hide}[1]{}
\pagestyle{myheadings}

\usepackage{lineno}

\usepackage{soul} %strikeout is \st{  }

%%%% Robert's packages

\usepackage{amsthm}
\usepackage[linesnumbered,ruled]{algorithm2e}
\usepackage{paralist}
\usepackage{tikz}
\usetikzlibrary{calc,shapes,positioning}
% VSt - vertex style
\tikzset
{
  VSt/.style =
  {
    circle,  % shape
    inner sep=-0.5,
    minimum size = 1.5mm,
    fill = black!10,
    draw = black,
  }
}

% SinkSt is the style of the sink node
\tikzset
{
  SinkSt/.style =
  {
    rectangle,  % shape
    inner sep=-0.5,
    minimum size = 1.1mm,
    fill = black!80,
    draw = black,
  }
}

% PointSt is the style of a general point
\tikzset
{
  PointSt/.style =
  {
    circle,  % shape
    inner sep=-0.5,
    minimum size = 0.8mm,
    fill = black!80,
    draw = black,
  }
}

% TickStH is the style of a horizontal tick mark node
\tikzset
{
  TickStH/.style =
  {
    rectangle,  % shape
    inner sep=-0.5,
    minimum height = 0.5pt,
    minimum width = 1mm,
    fill = black,
    draw = black,
  }
}

% TickStV is the style of a vertical tick mark node
\tikzset
{
  TickStV/.style =
  {
    rectangle,  % shape
    inner sep=-0.5,
    minimum height = 1mm,
    minimum width = 0.5pt,
    fill = black,
    draw = black,
  }
}

\usepackage{float}
\newfloat{floatTogether}{th!}{ext}

\begin{document}
%\linenumbers
\title{Minmax Regret 1-Sink for Aggregate\\ Evacuation Time on Path Networks\thanks{This work was supported
in part by NSERC Discovery Grants, awarded to B. Bhattacharya,
in part by JST CREST Grant Number JPMJCR1402 held by N. Katoh and Y. Higashikawa,
 and in part by JSPS Kakenhi Grant-in-Aid for Young Scientists (B) (17K12641) given to Y. Higashikawa.
}}%
\author{Binay Bhattacharya$^1$ \and Yuya Higashikawa$^2$ \and \\
Tsunehiko Kameda$^1$ \and Naoki Katoh$^3$
}
\author{Binay Bhattacharya\inst{1} \and Yuya Higashikawa\inst{2} \and\\
Tsunehiko Kameda\inst{1} \and Naoki Katoh\inst{3}
}
\institute{
School of Computing Science, Simon Fraser Univ., Canada
\and 
School of Business Administration, Univ. of Hyogo, Kobe, Japan
\and
School of Science \& Technology, Kwansei Gakuin Univ., Sanda, Japan}

\maketitle

\begin{abstract}
Evacuation in emergency situations can be modeled by a dynamic flow network.
Two criteria have been used before: one is the evacuation completion time and the other is the
aggregate evacuation time of individual evacuees.
The aim of this paper is to optimize the aggregate evacuation time in the simplest case, 
where the network is a path and only one evacuation center
(called a sink) is to be introduced.
The evacuees are initially located at the vertices,
but their precise numbers are unknown, and are given by upper and lower bounds.
Under this assumption,
we compute the sink location that minimizes the maximum ``regret.''
We present an $O(n^2\log n)$ time algorithm to solve this problem,
improving upon the previously fastest $O(n^3)$ time algorithm,
where $n$ is the number of vertices.
\end{abstract}

%%%%%
\section{Introduction}
Investigation of evacuation problems dates back many years~\cite{hamacher2002,mamada2002}.
The goal is to evacuate all the evacuees to some sinks to optimize a certain objective function.
The problem can be modeled by a dynamic flow network whose vertices represent the places where 
the evacuees are initially located and the edges represent possible evacuation routes.
Associated with each edge is the transit time across the edge and its capacity
in terms of the number of people who can traverse it per unit time~\cite{hamacher2002}.
A {\em completion time} {\em $k$-sink}, a.k.a. {\em min-max} {\em $k$-sink},
 is a set of $k$ sinks that minimizes
the time until every evacuee evacuates to a sink.
If the edge capacities are uniform,
it is straightforward to compute a completion time 1-sink in a path network in linear time,
as shown by Cheng and Higashikawa {\em et al.}~\cite{cheng2013,higashikawa2015c}.
Mamada {\em et al.}~\cite{mamada2006} solved this problem for a dynamic tree network
with non-uniform edge capacities in $O(n\log^2 n)$ time.
Higashikawa {\em et al.} proposed an $O(n\log n)$ algorithm for a tree network with
uniform edge capacities~\cite{higashikawa2014b}.

The concept of {\em regret} was introduced by Kouvelis and Yu~\cite{kouvelis1997},
to model the situations where optimization is required when the exact values (such
as the number of evacuees at the vertices) are unknown.
Their model only assumes that the upper and lower bounds on those values are known.
The objective is to find a solution which is as good as any other solution in the worst case,
where the actual values are the most unfavorable.

Motivated by the 2011 earthquake in Japan,
Cheng {\em et al.}~\cite{cheng2013} applied minmax regret optimization to
the completion time 1-sink problem to model evacuation whose objective function is the completion time,
and proposed an $O(n\log^2 n)$ time algorithm for dynamic flow path networks with uniform
edge capacities.
There has been a flurry of research activities on this problem since then.
The initial result was soon improved to $O(n\log n)$,
independently by Higashikawa {\em et al.}~\cite{higashikawa2015c} and 
Wang~\cite{wang2014},
and further to $O(n)$ by Bhattacharya and Kameda \cite{bhattacharya2015b}.
Li {\em et al.}~\cite{li2016b} propose an $O(n^3\log n)$ time algorithm to find
the minmax regret completion time 2-sink problem on dynamic flow path networks.
For the $k$-sink version of the problem,
Arumugam {\em et al.}~\cite{arumugam2014} give two algorithms,
which run in $O(kn^3\log n)$ and $O(kn^2(\log n)^k)$ time,
respectively.
As for dynamic flow tree networks with uniform edge capacities,
Higashikawa {\em et al.}~\cite{higashikawa2014b} propose an $O(n^2\log^2 n)$ time algorithm for finding
the minmax reget 1-sink.
An $O(n^3 \log n)$ time algorithm for dynamic flow cycle networks with uniform edge capacities
is reported by Xu and Li \cite{xu2015a}.

The objective function we adopt in this paper is the {\em aggregate evacuation time,}
i.e., the sum of the evacuation time of every evacuee,
a.k.a. {\em minsum}~\cite{higashikawa2017a}.
It is equivalent to minimizing the mean evacuation time,
and is motivated by the desire to minimize the transportation cost of evacuation
and the total amount of psychological duress suffered by the evacuees, etc.
It is more difficult than the {\em completion time} (a.k.a. {\em minmax}) variety because
the objective cost function is not unimodal.
It is shown by Benkoczi et al.~\cite{benkoczi2018a} that an aggregate time $k$-sink can be found
 in $O(kn\log^3 n)$ if edge capacities are uniform.
Our aim in this paper is to determine an aggregate time sink that minimizes regret~\cite{averbakh1997}.
The main contribution of this paper to to improve the time complexity
from $O(n^3)$ in \cite{averbakh1997} to $O(n^2\log n)$.
We need to consider $O(n^2)$ scenarios,
which are called {\em pseudo-bipartite} scenarios~\cite{higashikawa2017a}.
We make use of two novel ideas.
One is used in Sec.~\ref{sec:sinks} to compute an aggregate time sink under
each of the $O(n^2)$ scenarios in amortized $O(\log n)$ time per sink.
The other is used in Sec.~\ref{sec:regret} to compute the upper envelope of $O(n^2)$
regret functions (with $O(n^3)$ linear segments in total) in $O(n^2\log n)$ time.

In the next section, 
we define the terms that are used throughout this paper.
We also review some known facts which are relevant to later discussions.
Sec.~\ref{sec:clusters} introduces preprocessing which makes later operations
more efficient.
In Sec.~\ref{sec:sinks}
we show how to compute an aggregate time sink under scenarios that matter.
We then  compute in Sec.~\ref{sec:regret} the optimum sink that minimizes the max regret.

%%%%%
\section{Preliminaries}
%%%%%%
\subsection{Notations/definitions}\label{sec:defs}
Let $P(V,E)$ denote a given path network,
where we assume that the vertices in its vertex set $V=\{v_1, v_2, \ldots, v_{n}\}$
are arranged from left to right horizontally.
For $i=1,\ldots, n-1$, there is an edge $e_i=(v_i,v_{i+1})\in E$,
whose length is denoted by $d(e_i)$.
We write $p\in P$ for any point $p$ that is either at a vertex or on an edge of $P$. 
For two points $a,b \in P$, we write $a \prec b$ or $b \succ a$ if $a$ lies to the left of $b$.
The distance between them is denoted by $d(a,b)$.
If $a$ and/or $b$ lies on an edge, the distance is prorated. 
The capacity (the upper limit on the flow rate in each edge) of each edge is $c$ (persons per unit time),
and the transit time per unit distance by $\tau$.

In general, $w(v_i)\in \mathbb{Z}_+$ (the set of the positive integers) refers to the weight of vertex $v_i$,
which represents the number of evacuees initially located at $v_i$.
Under {\em scenario $s$}, 
vertex $v_i$ has a weight $w^s(v_i)$ such that $w^s(v_i) \in [\underline{w}(v_i), \overline{w}(v_i)]$,
where $\underline{w}(v_i)$ (resp. $\overline{w}(v_i)$) is the lower (resp. upper) limit on $w(v_i)$,
satisfying $0< \underline{w}(v_i) \le \overline{w}(v_i)$.
We define the Cartesian product 
\[
{\cal S}\triangleq  \prod_{i=1}^{n} [\underline{w}(v_i), \overline{w}(v_i)]. 
\]
Our objective function is the sum of the evacuation times of all the individual evacuees to point $x$.

\medskip\noindent
More definitions:
\begin{eqnarray}
\Phi^s_L(x) &\triangleq& \mbox{the cost at~} x \mbox{~for the evacuees from the vertices on~}
P[v_1,v_i],\nonumber\\
	&&\mbox{where~} v_i\prec x \preceq v_{i+1}  \nonumber\\
\Phi^s_R(x) &\triangleq& \mbox{the cost at~}  x  \mbox{~for the evacuees from the vertices on~} P[v_{i+1},v_n],\nonumber\\
	&&\mbox{where~}v_i\preceq x \prec v_{i+1}  \nonumber\\
\Phi^s(x) &\triangleq& \Phi^s_L(x) + \Phi^s_R(x)  \nonumber\\
\mu^s &\triangleq& \mbox{\rm argmin}_x \Phi^s(x): \mbox{minsum sink under~} s \nonumber
\end{eqnarray}
%\footnote{ $\mu^s$ is named $q^s$ in \cite{higashikawa2017a}.} 
\begin{eqnarray}
R^s(x) &\triangleq& \Phi^s(x) - \Phi^s(\mu^s) \mbox{{\em :~regret} at~} x \mbox{~under~} s \nonumber\\
\text{(We say}&&\hspace{-3mm}\text{that scenario $s'$ {\em dominates} scenario $s$ at point $x$ if
$R^{s'}(x) \geq R^s(x)$ holds.)} \nonumber\\
R_{max}(x) &\triangleq& \max_{s\in \cal S}  R^s(x) \mbox{:~max regret at $x$}\nonumber\\
\overline{s}_i &\triangleq& \mbox{the {\em bipartite} scenario under which~}
w(v_j) = \overline{w}(v_j)  \mbox{~for all~} j\leq i
\mbox{~and~} \nonumber\\
&& w(v_j) = \underline{w}(v_j) \mbox{~for all~} j>i, \mbox{~where~} 0\le i \le n \nonumber\\
\underline{s}_i &\triangleq& \mbox{the bipartite scenario under which~} w(v_j) = \underline{w}(v_j)  \mbox{~for all~} j\leq i
\mbox{~and~} \nonumber\\
&& w(v_j) = \overline{w}(v_j) \mbox{~for all~} j>i \nonumber\\
s_0 &\triangleq& \overline{s}_0 = \underline{s}_n, s_M \triangleq \overline{s}_n =  \underline{s}_0\nonumber\\
W^s[v_i] &\triangleq& \sum_{k=1}^i w^s(v_k) \nonumber
\end{eqnarray}
Clearly, we can precompute $\underline{W}[v_i] \triangleq  W^{s_0}[v_i]~(=\sum_{k=1}^i \underline{w}(v_k))$
and  $\overline{W}[v_i] \triangleq W^{s_M}[v_i]~(= \sum_{k=1}^i \overline{w}(v_k))$
in $O(n)$ time for all $i$, $1\leq i \leq n$.

Evacuation starts from all the vertices at the same time $t=0$.
Our model assumes that the evacuees at all the vertices start evacuation at the same time
at the rate limited by the capacity ($c$ persions per unit time) of the outgoing edge.
It also assumes that all the evacuees at a non-sink vertex who were initially there
or who arrive there later evacuate in the same direction (either to the left or to the right),
i.e., the evacuee flow is {\em confluent}.
We sometimes use the term ``cost'' to refer to the aggregate evacuation time
of a group of evacuees to a certain destination point.

Our overall approach is as follows.
\begin{enumerate}
\item
Compute $\{\mu^s \mid s\in \overline{\cal S}^*\}$,
where $\overline{\cal S}^*$ is defined in Sec.~\ref{sec:backNforth} and $|\overline{\cal S}^*|=O(n^2)$.
This step takes $O(n^2\log n)$ time.
\item
Compute $R_{\it max}(x) = \max \{R^s(x) \mid s\in \overline{\cal S}^*\}$.
This step takes $O(n^2\log n)$ time.
\item
Find point $x=\mu^*$ that minimizes $R_{\it max}(x)$.
This step takes $O(n^2)$ time.
\end{enumerate}

%%%%%%
\subsection{Clusters}
Given a point $x\in P$, which is not the sink,
the evacuee flow at $x$ toward the sink is a function of time,
in general, alternating between no flow and flow at the rate of
$c$ (persons per unit time),
which is the capacity of each edge.
A maximal group of vertices that provide uninterrupted flow without any gap forms a {\em cluster}.
Such a cluster observed on edge $e_{k-1}=(v_{k-1},v_k)$ arriving from right
via $v_k$ is called an {\em ${\cal R}^s$-cluster}  {\em with respect to} (any point on) $e_{k-1}$,
including $v_{k-1}$. 
An {\em ${\cal L}^s$-cluster} {\em with respect to} $e_j=(v_j,v_{j+1})$, including $v_{j+1}$,
is similarly defined for evacuees arriving from left if the sink lies to the right of $v_j$.
If a cluster $C$ contains a vertex $v$, the cluster is said to {\em carry} the evacuees from $v$.
The first vertex of a cluster is called its {\em front vertex}.
\begin{itemize}
\item
${\cal C}^s_{R,k}$: sequence of all ${\cal R}^s$-clusters w.r.t. $e_{k-1}$ ($k=2, \ldots, n$).
\item
$C^s_{R,k}(v_i)\triangleq$ ${\cal R}^s$-cluster  w.r.t. $e_{k-1}$ that contains vertex $v_i ~(i\ge k)$.
\item
${\cal C}^s_{L,k}$: sequence of all ${\cal L}^s$-clusters w.r.t. $e_k$ ($k=1, \ldots, n-1$).
\item
$C^s_{L,k}(v_i)\triangleq$ ${\cal L}^s$-cluster  w.r.t. $e_k$ that contains vertex $v_i ~(i\le k)$.
\end{itemize}

Thus $C^s_{R,k}(v_k)$ is the first cluster of ${\cal C}^s_{R,k}$.
The total weight of the vertices contained in cluster $C$ is denoted by $\lambda(C)$.
If $v_h$ and $v_i$ $(v_h\prec v_i)$ are the front vertices of two adjacent clusters in ${\cal C}^s_{R,k}$,
then we have
\begin{equation}
d(v_h,v_i)\tau > \lambda(C^s_{R,k}(v_h))/c.\label{eqn:eqn1}
\end{equation}
Intuitively,
this means that when the evacuee from $v_i$ arrives at $v_h$,
all evacuees carried by $C^s_{R,k}(v_h)$ have left $v_h$ already.
For $v_{k-1}\preceq x \prec v_k$,
let us analyze the cost of $C^s_{R,k}(v_i)$ to reach $x$ from right,
where $v_i\succ v_k$.
For the $\lambda(C^s_{R,k}(v_i))$ evacuees to move to $x$,
let us divide the time required into two parts.
The first part, called the {\em intra cost}~\cite{benkoczi2018a},
is the weighted waiting time before departure from the front vertex of $C^s_{R,k}(v_i)$,
and can be expressed as
\begin{equation}\label{eqn:intraCost1}
I(C^s_{R,k}(v_i)) \triangleq \{\lambda(C^s_{R,k}(v_i))\}^2/2c.
\end{equation}
Intuitively, (\ref{eqn:intraCost1}) can be interpreted as follows.
As far as the waiting and travel time is concerned,
we may assume that all the $\lambda(C^s_{R,k}(v_i))$ evacuees were at the front vertex of
$C^s_{R,k}(v_i)$ to start with.
Since evacuees leave $v_i$ at the rate of $c$,
the mean wait time for an evacuee is $\lambda(C^s_{R,k}(v_i))/2c$
and the total for all the evacuees carried by $C^s_{R,k}(v_i)$
is $\lambda(C^s_{R,k}(v_i))/2c\times\lambda(C^s_{R,k}(v_i))=\{\lambda(C^s_{R,k}(v_i))\}^2/2c$.
Note that the intra cost does not depend on $x$,
as long as $v_{k-1}\preceq x \prec v_k$.
To be exact, the ceiling function must be applied to (\ref{eqn:intraCost1}),
but we omit it for simplicity,
and {\em adopt} (\ref{eqn:intraCost1}) as our intra cost~\cite{cheng2013}.

The second part, called the {\em extra cost}~\cite{benkoczi2018a},
is the total transit time from the front vertex of $C^s_{R,k}(v_i)$ to $x$ for all the evacuees carried by $C^s_{R,k}(v_i)$,
and can be expressed as
\begin{equation}\label{eqn:extraCost1}
E(C^s_{R,k}(v_i))  \triangleq d(x,v_j)\lambda(C^s_{R,k}(v_i))\tau,
\end{equation}
where $v_j~(\succ x)$ is the front vertex of $C^s_{R,k}(v_i)$.
For the evacuees carried by $C^s_{L,k}(v_i)$ moving to the right,
we similarly define  its intra and extra costs for $k=1, \ldots, n-1$,
where $v_i\preceq v_k\prec x \preceq v_{k+1}$.

For $v_{k-1}\preceq x \prec v_k$,
we now introduce a cost function
\begin{eqnarray}
\Phi^s_{R,k}(x) & \triangleq &
 \sum_{C\in {\cal C}^s_{R,k}} d(x,v_i)\lambda(C)\tau 
+ \sum_{C\in {\cal C}^s_{R,k}} \lambda(C)^2/2c. \label{eqn:right1}
\end{eqnarray}
Similarly, for $x$ ($v_k\prec x \preceq v_{k+1}$),  we define
\begin{eqnarray}
\Phi^s_{L,k}(x) & \triangleq&  \sum_{C\in {\cal C}^s_{L,k}} d(v_i,x)\lambda(C)\tau
 + \sum_{C\in {\cal C}^s_{L,k}} \lambda(C)^2/2c. \label{eqn:left1}
\end{eqnarray}
When $v_k$ is clear from the context, or when there is no need to refer to it,
we may write $\Phi^s_R(x)$ (resp. $\Phi^s_L(x)$) to mean $\Phi^s_{R,k}(x)$  (resp. $\Phi^s_{L,k}(x)$). 
The aggregate of the evacuation times to $x$ of all evacuees is given by
\begin{equation}\label{eqn:Phisx}
\Phi^s(x)=\left\{\begin{array}{lll}
                     &\Phi^s_{L,k}(x) + \Phi^s_{R,k+1}(x)	&\text{~for } v_k\prec x \prec v_{k+1}\\
                      &\Phi^s_{L,k-1}(x) + \Phi^s_{R,k+1}(x) &\text{~for }  x= v_k.
                 \end{array}
                 \right.
\end{equation}
A point $x$ that minimizes $\Phi^s(x)$ is called a  {\em aggregate time sink},
a.k.a. {\em minsum sink}, under $s$.
An aggregate time sink shares the following property of a {\em median}~\cite{kariv1979b}.
\begin{lemma}{\rm \cite{higashikawa2015a}}\label{lem:atVertex}
Under any scenario there is an aggregate time sink at a vertex.
\end{lemma}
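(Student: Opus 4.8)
The plan is to show that $\Phi^s$ is affine on the interior of every edge and that it can only drop at the vertices, so that no interior point can strictly beat both of its neighbouring vertices. Fix a scenario $s$. By the definition of a cluster, two consecutive front vertices $v_h\prec v_i$ of ${\cal C}^s_{R,k}$ satisfy the separation condition (\ref{eqn:eqn1}), which involves only the inter-vertex distances and the weights $w^s(\cdot)$ and is completely independent of the exact location of $x$ on $e_{k-1}$. Hence, as $x$ ranges over an open edge $(v_k,v_{k+1})$, neither the family ${\cal C}^s_{L,k}$ nor ${\cal C}^s_{R,k+1}$ changes: the same vertices are grouped into the same clusters throughout.

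First I would feed this invariance into (\ref{eqn:left1}) and (\ref{eqn:right1}). On $(v_k,v_{k+1})$ the intra-cost sums $\sum_C \lambda(C)^2/2c$ are constant, while each extra-cost term is affine in the position of $x$. Summing, $\Phi^s_{L,k}$ increases with slope $\tau\sum_{C\in{\cal C}^s_{L,k}}\lambda(C)=\tau W^s[v_k]$ and $\Phi^s_{R,k+1}$ decreases with slope $-\tau(W^s[v_n]-W^s[v_k])$, so by the first line of (\ref{eqn:Phisx}) the function $\Phi^s$ is affine on the open edge, with slope $\tau(2W^s[v_k]-W^s[v_n])$. Consequently, for any interior point $p\in(v_k,v_{k+1})$ we have $\Phi^s(p)\ge\min\{\lim_{x\to v_k^+}\Phi^s(x),\ \lim_{x\to v_{k+1}^-}\Phi^s(x)\}$; that is, an interior sink is never better than the limit taken towards one of the two endpoints.

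Next I would compare these one-sided limits with the actual vertex values given by the second line of (\ref{eqn:Phisx}). The key claim is that $\Phi^s$ can only jump downward at a vertex, i.e. $\Phi^s(v_k)\le\lim_{x\to v_k^+}\Phi^s(x)$ and $\Phi^s(v_k)\le\lim_{x\to v_k^-}\Phi^s(x)$. Intuitively, when the sink sits exactly on $v_k$ the $w^s(v_k)$ evacuees initially at $v_k$ are already at the sink and contribute nothing, whereas for a sink infinitesimally to one side of $v_k$ those same evacuees must still clear the incident edge at rate $c$ and therefore incur the positive intra cost (\ref{eqn:intraCost1}); the travel component of their cost vanishes in the limit but the waiting component does not. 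Combining the affine bound with the downward-jump bound yields $\Phi^s(p)\ge\min\{\Phi^s(v_k),\Phi^s(v_{k+1})\}$ for every interior $p$, and since there are finitely many edges the global minimum of $\Phi^s$ over $P$ is attained at a vertex, proving the lemma.

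The step I expect to be the main obstacle is the downward-jump claim, because crossing $v_k$ recomputes the entire cluster decomposition (from the families with respect to $e_{k-1}$ to those with respect to $e_k$), and one must rule out that this reorganization of the remaining vertices creates extra cost outweighing the waiting cost saved at $v_k$. The cleanest way I see to dispatch it is a monotonicity-of-congestion argument: the limit as $x\to v_k^+$ equals the cost of evacuating $v_1,\ldots,v_k$ to $v_k$, and inserting the group at $v_k$ at the head of the left-moving flow can only delay the upstream evacuees $v_1,\ldots,v_{k-1}$ (it never speeds them up), while $v_k$'s own evacuees contribute a nonnegative amount; hence this limit is at least the cost of evacuating only $v_1,\ldots,v_{k-1}$ to $v_k$, which is precisely the left part $\Phi^s_{L,k-1}(v_k)$ of $\Phi^s(v_k)$. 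Since the right parts coincide, the jump is downward, and the argument at $v_k^-$ is symmetric.
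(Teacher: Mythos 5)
Your proposal is correct and follows essentially the same route as the paper, which does not prove this lemma itself but cites it from \cite{higashikawa2015a} and justifies it only by the observation (illustrated in Example~\ref{ex:ex1}) that $\Phi^s(x)$ is affine on each open edge with a negative spike at every vertex --- precisely the two facts you establish. The one step you flag as delicate, the downward jump at $v_k$, does go through under the paper's idealized intra-cost formula: using the merging condition $d(u_{k,i},u_{k,i+1})\tau\le\lambda/c$ for the clusters absorbed into $v_k$'s cluster, the increase in intra cost exceeds the decrease in extra cost by at least $\{w^s(v_k)\}^2/2c>0$, so your congestion-monotonicity argument can be made fully rigorous in the adopted cost model.
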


%%%%
\begin{example}\label{ex:ex1}
Consider an example path network in Fig.~\ref{fig:ex1},
where a circle represents a vertex whose weight under some scenario $s$ is shown in it, 
and the length of each edge is shown above it.
The capacity of each edge is assumed to be $c=1$.
\begin{figure}[ht]
\centering
\includegraphics[height=8mm]{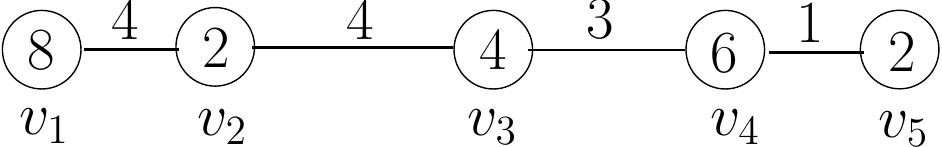}
\caption{An example path network.
}
\label{fig:ex1}
\end{figure}
Let $x$ denote the distance from $v_1$, $d(v_1,x)$,
as well as its position.
Using (\ref{eqn:Phisx}),
we obtain
\begin{eqnarray}\label{eqn:costFunc1}
\Phi^s(x) = 8x + 2(4-x) + 12(8-x) + 32+ 74 = 210 - 6x,
\end{eqnarray}
for $v_1\prec x \prec v_2$,
for example.
Fig.~\ref{fig:Phiofx2} plots $\Phi^s(x)$.
\begin{figure}[ht]
\centering
\includegraphics[height=42mm]{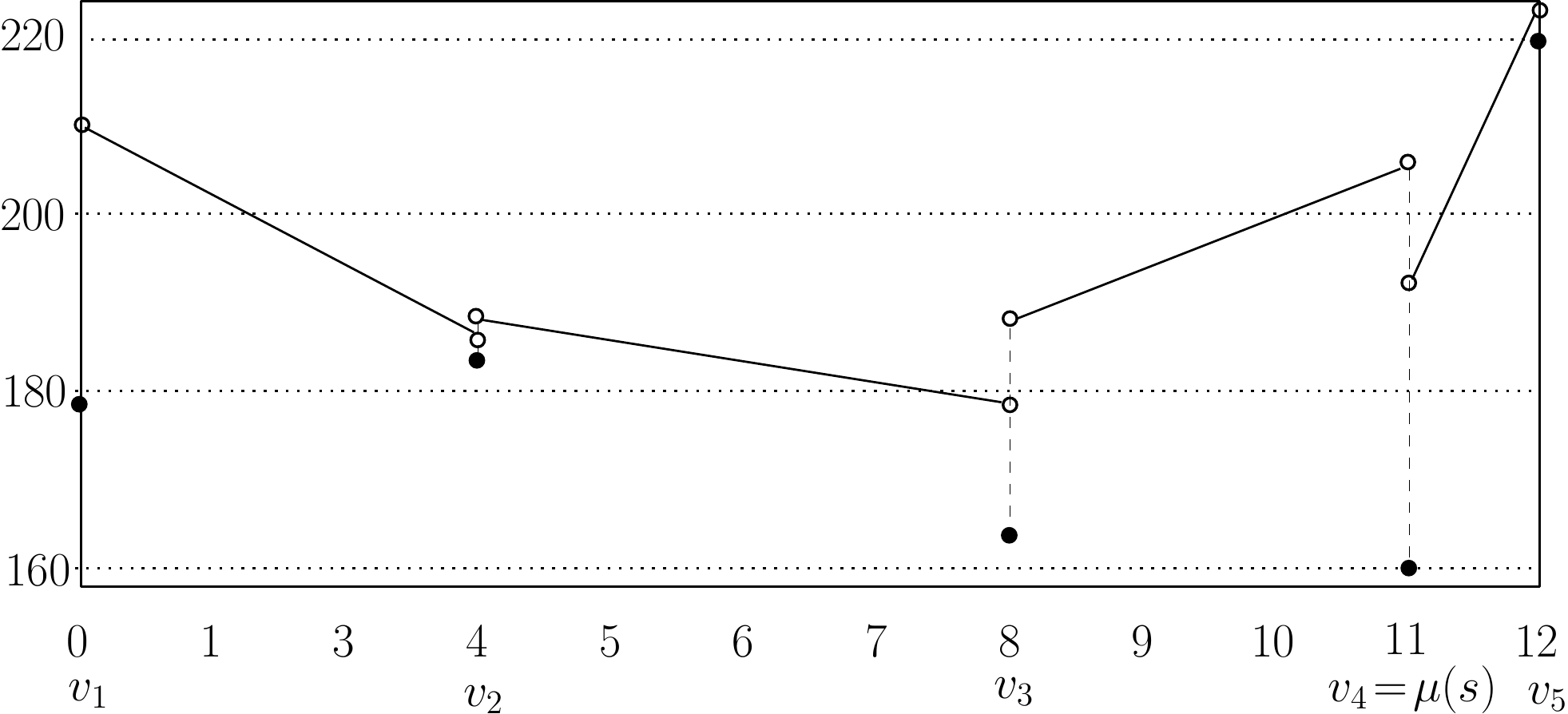}
\caption{Graph for $\Phi^s(x)$.
}
\label{fig:Phiofx2}
\end{figure}
\qed
\end{example}

The above example illustrates the fact that $\Phi^s(x)$ is piecewise linear with discontinuities at the
vertices.
Observe that there is a negative spike at each vertex
because its intra and extra cost contribution is absent,
and that the minsum sink $\mu^s$ is at a vertex,
as stated in Lemma~\ref{lem:atVertex}.

%%%%
\subsection{What is known}\label{sec:known}
\begin{lemma}{\rm \cite{higashikawa2017a}}\label{lem:minsumsink1}
For any given scenario $s\in {\cal S}$, 
\begin{enumerate}
\item[(a)]
We can compute $\{\Phi^s_L(v_i), \Phi^s_R(v_i)\mid i =1,\ldots,n\}$ in $O(n)$ time.
\item[(b)]
We can compute $\mu^s$ and $\Phi^s(\mu^s)$ in $O(n)$ time.
\end{enumerate}
\end{lemma}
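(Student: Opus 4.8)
The plan is to derive (b) as a short consequence of (a) and Lemma~\ref{lem:atVertex}, and to concentrate essentially all of the work on (a). For (a) the structural fact I would isolate first is that the partition of the right-flowing evacuees into ${\cal R}^s$-clusters does \emph{not} depend on the target point: by~(\ref{eqn:eqn1}) two consecutive front vertices $v_h\prec v_i$ belong to different clusters exactly when $d(v_h,v_i)\tau > \lambda(C^s_{R,k}(v_h))/c$, and this inequality never mentions $x$. Hence, as the target sweeps leftward, the cluster sequence ${\cal C}^s_{R,k}$ changes only when the sweep passes a vertex, and at that instant a single new vertex is prepended to the front of the flowing group. This is what makes an incremental, stack-based computation possible.

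Concretely I would sweep $i=n,n-1,\ldots,1$, keeping the cluster sequence of $\{v_{i+1},\dots,v_n\}$ on a stack (front cluster on top) together with three scalars: the total weight $W$, the intra cost $I=\sum_C\lambda(C)^2/2c$ of~(\ref{eqn:intraCost1}), and the extra cost $X=\sum_C d(\text{target},\text{front}(C))\lambda(C)\tau$ of~(\ref{eqn:extraCost1}); by~(\ref{eqn:right1}) we then have $\Phi^s_R(v_i)=X+I$ whenever the target rests at $v_i$. To advance the target from $v_i$ to $v_{i-1}$ I would (i) shift it left by $\delta=d(v_{i-1},v_i)$, setting $X\leftarrow X+\delta\tau W$ while leaving $I,W$ unchanged, and then (ii) prepend $v_i$ as a singleton cluster, setting $W\leftarrow W+w(v_i)$, $X\leftarrow X+\delta\tau\,w(v_i)$, $I\leftarrow I+w(v_i)^2/2c$, after which I repeatedly test the condition~(\ref{eqn:eqn1}) between the top cluster (weight $\Lambda$) and the one below it (front $v_m$, weight $\Lambda'$), and while it holds I pop and merge, setting $I\leftarrow I+\Lambda\Lambda'/c$ and $X\leftarrow X-d(v_i,v_m)\tau\,\Lambda'$. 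These two merge updates are precisely the change that~(\ref{eqn:right1}) undergoes when the $\Lambda'$ evacuees are reattributed from front $v_m$ to the new front $v_i$, which I would confirm from the fluid bookkeeping underlying~(\ref{eqn:intraCost1})--(\ref{eqn:extraCost1}). The left costs $\Phi^s_L(v_i)$ follow from the mirror-image sweep $i=1,\dots,n$.

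The running time of (a) rests on an amortized argument: a cluster is born only when a vertex is prepended, so at most $n$ clusters are ever created over the whole sweep, and each merge permanently destroys one; the total number of merges is therefore at most $n$. Since the target shift, the prepend, and each individual merge cost $O(1)$ arithmetic, the entire sweep runs in $O(n)$ time. Part~(b) is then immediate: by Lemma~\ref{lem:atVertex} some minsum sink sits at a vertex, and~(\ref{eqn:Phisx}) gives $\Phi^s(v_i)=\Phi^s_L(v_i)+\Phi^s_R(v_i)$, so I would form these $n$ sums, return an index attaining the minimum as $\mu^s$ and that minimum value as $\Phi^s(\mu^s)$, in $O(n)$ further time. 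The part I expect to be delicate is (a): arguing that the seemingly quadratic ``recompute the clusters after every prepend'' collapses to amortized $O(1)$ per step, and verifying that the constant-time updates of $X$ and $I$ stay \emph{exact} through a cascade of merges rather than merely approximate.
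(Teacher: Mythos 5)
The paper does not actually prove this lemma---it is imported verbatim from \cite{higashikawa2017a}---so there is no in-paper proof to compare against line by line. That said, your argument is correct and is essentially the same incremental cluster-maintenance idea that the paper itself deploys in the proof of Lemma~\ref{lem:clusters1} and in Lemma~\ref{lem:EandIcosts}/Corollary~\ref{cor:s0sMclusters}: sweep the evaluation point across the vertices, keep the current cluster sequence on a stack, observe that each vertex spawns exactly one new cluster and each merge destroys one (so merges total at most $n$), and update the running extra and intra sums in $O(1)$ per event. Your merge updates $I\leftarrow I+\Lambda\Lambda'/c$ and $X\leftarrow X-d(v_i,v_m)\tau\Lambda'$ are exactly the changes dictated by (\ref{eqn:intraCost1})--(\ref{eqn:right1}), and the reduction of (b) to (a) via Lemma~\ref{lem:atVertex} and (\ref{eqn:Phisx}) is the intended one. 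One small wording slip: you say you ``test condition (\ref{eqn:eqn1}) \ldots and while it holds I pop and merge,'' but (\ref{eqn:eqn1}) is the \emph{separation} condition; you should merge while its negation $d(v_i,v_m)\tau\le\Lambda/c$ holds, as your own opening sentence and your update formulas make clear. This is cosmetic, not a gap.
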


Let $v$ be a vertex and $x$ be a point such that $v \prec x$.
We define a function
\begin{equation}\label{eqn:gamma}
\Gamma^s(x,v) = \Phi^s(x) - \Phi^s(v).
\end{equation}
We have $v=\mu^s$ under some scenario $s$ in mind,
since the regret function can be expressed as $R^s(x) = \Phi^s(x) - \Phi(\mu^s)=\Gamma^s(x,\mu^s)$.
\begin{lemma}{\rm \cite{higashikawa2017a}}\label{lem:bipartite}
For a fixed pair $x,v \in P$,
consider $\Gamma^s(x,v)$ as a funtion of $s$.
Any local maximum of $\Gamma^s(x,v)$ occurs under scenario $s$
under which an adjacent pair of clusters touches each other,
forming a larger cluster.
\end{lemma}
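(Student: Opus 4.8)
The plan is to analyze $\Gamma^s(x,v)$ as a function on the scenario box ${\cal S}$, exploiting the fact that the two cluster decompositions used to evaluate it are locally constant in $s$. First I would partition ${\cal S}$ into maximal regions on which the combinatorial cluster structures ${\cal C}^s_{L,\cdot}$ and ${\cal C}^s_{R,\cdot}$ entering both $\Phi^s(x)$ and $\Phi^s(v)$ are fixed. By (\ref{eqn:eqn1}), each such region is a polytope cut out of ${\cal S}$ by the separation inequalities $d(v_h,v_i)\tau \ge \lambda(\cdot)/c$, and its relative boundary consists exactly of scenarios at which some adjacent pair of clusters touches. On a fixed region, $\Phi^s(x)$ and $\Phi^s(v)$ are given by the fixed polynomials (\ref{eqn:right1})--(\ref{eqn:left1}), so $\Gamma^s(x,v)$ is an honest (at most quadratic) polynomial in the weights.

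The crux is to show that this polynomial is \emph{multiaffine}, i.e.\ affine in each weight $w^s(v_j)$ separately. The only quadratic contributions come from the intra-cost terms $\lambda(C)^2/2c$ of (\ref{eqn:intraCost1}). Since every vertex $v_j$ (other than any lying exactly at $x$ or $v$) belongs to precisely one cluster of the $x$-decomposition and to precisely one cluster of the $v$-decomposition, the coefficient of $w^s(v_j)^2$ equals $1/2c$ in both $\Phi^s(x)$ and $\Phi^s(v)$ and hence cancels in $\Gamma^s(x,v)=\Phi^s(x)-\Phi^s(v)$. What survives are only linear extra-cost terms and cross terms $w^s(v_j)w^s(v_l)$ with $j\ne l$, which establishes multiaffinity. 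I would also record that $\Gamma^s$ is \emph{continuous} across region boundaries: when two adjacent ${\cal R}^s$-clusters with fronts $v_a\prec v_b$ and weights $\lambda_1,\lambda_2$ merge, the change in total cost equals $\lambda_2(\lambda_1/c-\tau d(v_a,v_b))$, which vanishes precisely because (\ref{eqn:eqn1}) holds with equality at the touching scenario (the intra- and extra-cost jumps exactly compensate).

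With these two facts the lemma follows by a coordinate-ascent argument. A multiaffine function has no strict local extremum in the relative interior of a region: fixing all weights but $w^s(v_j)$ leaves an affine, hence weakly monotone, function of $w^s(v_j)$, so from any interior scenario one can move a single weight without decreasing $\Gamma^s$ until a constraint becomes tight. Increasing a weight enlarges the $\lambda$ of its cluster and thus drives (\ref{eqn:eqn1}) toward equality, so the blocking constraint is a cluster-touching event, and by continuity the value is preserved up to and including that boundary scenario. Iterating over coordinates, any local maximum can be driven to a scenario lying on the boundary of a cluster region, i.e.\ one under which an adjacent pair of clusters touches, as claimed.

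The main obstacle I anticipate is the boundary bookkeeping rather than the core algebra: one must verify that the region boundaries are \emph{exactly} the touching scenarios and not spurious faces, that the continuity computation is uniform over all merge types (left- versus right-clusters, and in particular vertices lying between $v$ and $x$, whose flow direction differs in the two decompositions), and that coordinate-ascent is not pinned solely by the box faces $w^s(v_j)\in\{\underline{w}(v_j),\overline{w}(v_j)\}$. This last point requires the separate remark that a maximum blocked only by box constraints is a bipartite scenario, which the enumeration treats alongside the touching scenarios; the genuinely new content of the lemma is the elimination of interior maxima via multiaffinity.
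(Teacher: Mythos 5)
First, a point of reference: the paper does not prove Lemma~\ref{lem:bipartite} at all --- it is imported from \cite{higashikawa2017a} --- so your proposal can only be measured against the argument there, which is indeed of the kind you describe: perturb one weight at a time, observe that $\Gamma^s(x,v)$ is piecewise linear in that weight with breakpoints exactly at cluster mergers, and conclude that one-dimensional local maxima sit at breakpoints or at the endpoints of the weight interval. Your region decomposition, your continuity check (the jump $\lambda_2\bigl(\lambda_1/c-\tau\,d(v_a,v_b)\bigr)$ vanishing at a touching scenario), and the coordinate-ascent skeleton are all sound and essentially reproduce that argument.

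Two specific points need repair. (i) The multiaffinity claim fails for exactly the two coordinates you parenthetically excluded and then never handled. The vertex $v$ contributes no intra cost to $\Phi^s(v)$, so $w(v)^2$ survives in $\Gamma^s(x,v)$ with coefficient $+1/2c$; symmetrically, if $x$ is a vertex, $w(x)^2$ survives with coefficient $-1/2c$. A concave coordinate section can have an interior maximum, so ``affine, hence weakly monotone'' breaks precisely where you need it. The rescue is not multiaffinity but sign-definiteness of the derivative: $\partial\Gamma^s/\partial w(v)=\partial\Phi^s(x)/\partial w(v)=d(x,u)\tau+\lambda(C)/c>0$ and $\partial\Gamma^s/\partial w(x)=-\partial\Phi^s(v)/\partial w(x)<0$ throughout the box, so these two coordinates are always driven to a box face regardless of cluster structure; you must say this explicitly. (ii) Your closing remark that a maximum blocked only by box constraints ``is a bipartite scenario'' is wrong: a scenario pinned at a box face in every coordinate is an arbitrary corner of $\prod_j\{\underline{w}(v_j),\overline{w}(v_j)\}$, of which there are $2^n$, and nothing in coordinate ascent forces the max-weight coordinates to form a prefix. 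Reducing such corners to the pseudo-bipartite scenarios of ${\cal S}^*$ is the substantive content of Lemma~\ref{lem:minsumsink2}(a) and needs a separate exchange argument comparing the signs of $\partial\Gamma^s/\partial w(v_j)$ for vertices on the two sides of $x$ and $v$; it is not absorbed by ``the enumeration.'' For Lemma~\ref{lem:bipartite} as literally stated this is arguably out of scope, but as written your proof invites the reader to believe the box-face case is trivial when it is where the remaining work lives. Finally, your ascent step only discusses increasing a weight; decreasing one exits the region through a face where a merged cluster splits, which is again a touching scenario, so the conclusion stands but the case should be mentioned.
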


A scenario $s$ under which all vertices on the left (resp. right) of a vertex have
the max (resp. min) weights is called an {\em L-pseudo-bipartite} scenario~\cite{higashikawa2017a}.
The vertex $v_b$, where $1\le b\le n$,
that may take an intermediate weight $w^s(v_b) \in [\underline{w}(v_b), \overline{w}(v_b)]$,
is called the {\em boundary vertex},
a.k.a. {\em intermediate vertex}~\cite{higashikawa2017a}.
Let $b(s)$ denote the index of the boundary vertex under scenario $s$.
We consider the scenarios under which $w(v_b)=\underline{w}(v_b)$ and $w(v_b)=\overline{w}(v_b)$
also as special pseudo-bipartite scenarios,
and in the former (resp. latter) case,
either $b(s)= {b-1}$ or $b(s)=b$ (resp. $b(s)=b$ or $b(s)=b+1$).
The vertices that have the maximum (resp. minmum) weights comprise
the {\em max-weighted part} (resp. {\em min-weighted part}).
We define an {\em R-pseudo-bipartite} scenario symmetrically
with the max-weighted part and the min-weighted part reversed.
As $w(v_b)$ increases from $\underline{w}(v_b)$ to $\overline{w}(v_b)$,
clusters may merge.

Weight $w^s(v_b)$ is said to be a {\em critical weight},
if two clusters with respect to {\em any vertex}
merge as $w(v_b)$ increases to become a scenario $s$.
Let ${\cal S}^*_L$ (resp. ${\cal S}^*_R$) denote the set of the L- (resp. R-)pseudo-bipartite scenarios
that correspond to the critical weights.
Thus each scenario in ${\cal S}^*_L$ (resp. ${\cal S}^*_R$) can be specified by $v_b$ and $w(v_b)$.
Let ${\cal S}^* \triangleq{\cal S}^*_L \cup {\cal S}^*_R$.

\begin{lemma}{\rm \cite{higashikawa2017a}}\label{lem:minsumsink2}
\begin{enumerate}
\item[(a)]
Each scenario in ${\cal S}$ is dominated at every point by a scenario in ${\cal S}^*$.
\item[(b)]
\label{lem:SstarSize} 
$|{\cal S}^*| =O(n^2)$,
and all scenarios in ${\cal S}^*$ can be determined in $O(n^2)$ time.
\end{enumerate}
\end{lemma}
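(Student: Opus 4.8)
The plan is to prove (a) by rewriting the regret through the auxiliary function $\Gamma$ and locating its worst-case scenario, and to prove (b) by counting boundary vertices against critical weights. For (a), fix a point $x$ and an arbitrary scenario $s\in\mathcal{S}$, and put $v\triangleq\mu^s$, so that $R^s(x)=\Phi^s(x)-\Phi^s(v)=\Gamma^s(x,v)$ in the case $v\prec x$ (the case $x\prec v$ is symmetric and will produce an R- rather than an L-pseudo-bipartite witness). The key observation is that for \emph{any} scenario $s^*$ and \emph{any} vertex $v$, minimality of $\mu^{s^*}$ gives $R^{s^*}(x)=\Phi^{s^*}(x)-\Phi^{s^*}(\mu^{s^*})\ge\Phi^{s^*}(x)-\Phi^{s^*}(v)=\Gamma^{s^*}(x,v)$. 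Hence it suffices to produce, for the fixed pair $(x,v)$, a scenario $s^*\in\mathcal{S}^*$ with $\Gamma^{s^*}(x,v)\ge\Gamma^s(x,v)$; the domination $R^{s^*}(x)\ge R^s(x)$ then follows at once. So I would study $\Gamma^{\,\cdot}(x,v)$ as a function of the scenario and show that its maximum is attained at a pseudo-bipartite critical-weight scenario.

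To locate that maximizer I would combine a monotonicity argument with Lemma~\ref{lem:bipartite}. Adding one evacuee at $v_j$ raises $\Phi^s(\cdot)$ by exactly that evacuee's own evacuation time (its waiting time $\lambda/c$ behind the cluster carrying it, plus its transit time to the sink), so, away from cluster merges, $\partial\Gamma^s(x,v)/\partial w(v_j)$ equals the marginal evacuation time to $x$ minus that to $v$. Assuming $v\prec x$, this difference is nonnegative for $v_j$ left of $v$ and nonpositive for $v_j$ right of $x$, and, as $v_j$ sweeps rightward, it decreases through a single sign change inside $(v,x)$. Consequently a maximizer can be taken with every weight extremal in a maximum-prefix/minimum-suffix pattern, i.e.\ in L-pseudo-bipartite form with a single boundary vertex. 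To handle the one remaining degree of freedom---the quadratic intra cost at the boundary vertex---I would invoke Lemma~\ref{lem:bipartite}: sweeping the boundary weight, every local maximum of $\Gamma$ occurs where an adjacent pair of clusters just touches, i.e.\ at a critical weight. This places the maximizer in $\mathcal{S}^*_L$ (resp.\ $\mathcal{S}^*_R$), which establishes (a).

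For (b) I would count $|\mathcal{S}^*|$ as (number of boundary vertices)$\,\times\,$(critical weights per boundary vertex). There are $n$ choices of $v_b$, so the claim reduces to showing only $O(n)$ critical weights per $v_b$ as $w(v_b)$ runs over $[\underline{w}(v_b),\overline{w}(v_b)]$. Two observations give this. First, by~(\ref{eqn:eqn1}) whether two adjacent clusters merge depends only on their front vertices and the weight of the leading cluster, not on the observation edge; hence merges are intrinsic to the flow direction and are \emph{not} indexed by the reference vertex, which averts the naive blowup of $O(n)$ references times $O(n)$ merges. Second, raising $w(v_b)$ only enlarges the cluster carrying $v_b$ and therefore can only make it \emph{absorb} a neighbour---its right neighbour in the leftward flow, its left neighbour in the rightward flow---so the merges are monotone and irreversible, and each other cluster is absorbed at most once. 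Thus there are $O(n)$ critical weights in each flow direction, $O(n)$ per $v_b$, and $|\mathcal{S}^*|=O(n^2)$. These critical weights are generated by one monotone absorption sweep per boundary vertex, using the precomputed prefix sums $\underline{W}[\cdot]$ and $\overline{W}[\cdot]$, in $O(n)$ time each and $O(n^2)$ in total.

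The step I expect to be the main obstacle is, in (a), the simultaneous conclusion that the worst-case scenario is \emph{both} pseudo-bipartite \emph{and} at a critical weight. The transit (extra) part of the cost makes the monotonicity and the single sign change clean, but at the boundary vertex the waiting (intra) part is quadratic and non-monotone, and the cluster carrying a vertex need not be the same for the sink at $x$ as for the sink at $v$; so the monotonicity computation and Lemma~\ref{lem:bipartite} must be interleaved rather than applied in sequence---one argues that moving the boundary weight across a merge-free interval changes $\Gamma$ monotonically, so that the optimum is forced onto one of the finitely many critical weights. The analogous delicate point in (b) is the reference-independence of merges, which must be read off directly from~(\ref{eqn:eqn1}) to be sure the count does not secretly scale with the number of reference vertices.
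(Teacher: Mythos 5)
First, a point of reference: the paper does not prove Lemma~\ref{lem:minsumsink2} at all --- it is imported verbatim from \cite{higashikawa2017a} --- so your attempt can only be judged against the cited source's strategy and against the machinery the paper itself builds in Sec.~\ref{sec:known} and Sec.~\ref{sec:Sstar}. Your part (a) is essentially the right argument: the reduction $R^{s^*}(x)\ge \Gamma^{s^*}(x,v)\ge\Gamma^{s}(x,v)=R^{s}(x)$ via minimality of $\mu^{s^*}$ is exactly how one uses Lemma~\ref{lem:bipartite}, and the sign analysis of $\partial\Gamma/\partial w(v_j)$ is the standard route to the pseudo-bipartite form. Two small repairs: (i) the ``quadratic and non-monotone'' worry you raise at the boundary vertex dissolves, because in $\Gamma^s(x,v)=\Phi^s(x)-\Phi^s(v)$ the two intra costs are each of the form $(\mathrm{const}+w(v_b))^2/2c$ between merges, so the quadratic terms cancel and $\Gamma$ is \emph{linear} in $w(v_b)$ on every merge-free interval --- this is what forces the maximum onto a critical weight or an interval endpoint; (ii) you must therefore also admit the endpoint scenarios $w(v_b)\in\{\underline{w}(v_b),\overline{w}(v_b)\}$, which the paper explicitly folds into its definition of pseudo-bipartite scenarios.

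Part (b) contains a genuine gap. Your count rests on the claim that, by (\ref{eqn:eqn1}), a merge ``depends only on the front vertices and the weight of the leading cluster, not on the observation edge,'' hence is reference-independent. That is false: the leading cluster itself is reference-dependent. The first cluster of ${\cal C}^s_{R,k}$ always begins at $v_k$, so for two references $k<k'$ the cluster carrying $v_b$ generally has different front vertices and different total weights, and the inequality $d(u,u')\tau\le\lambda(C)/c$ is therefore satisfied at \emph{different} values of $w(v_b)$ for different $k$. (Concretely: with $v_b=v_{k+1}$, reference $v_k$ merges $\{v_k,v_{k+1}\}$ with $\{v_{k+2}\}$ when $d(v_k,v_{k+2})\tau\le(\overline{w}(v_k)+w(v_b))/c$, while reference $v_{k+1}$ merges $\{v_{k+1}\}$ with $\{v_{k+2}\}$ when $d(v_{k+1},v_{k+2})\tau\le w(v_b)/c$ --- two distinct critical weights for the same $v_b$.) Consequently your decomposition ``$n$ boundary vertices $\times$ $O(n)$ critical weights each'' is not justified; the naive bound per boundary vertex is worse than $O(n)$. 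The accounting that works --- and that the paper implements in Sec.~\ref{sec:Sstar}, Algorithm~\ref{alg:alg-1} and Lemma~\ref{lem:clusters2} --- is per \emph{reference} vertex: for each $k$, only mergers into the first cluster $C^s_{R,k}(v_k)$ are charged to $\Delta_{R,k}$ (mergers among later clusters are charged to the reference at their own front vertex), and over the entire lexicographic sweep of $(b,w(v_b))$ each cluster of ${\cal C}^{s_0}_{R,k}$ is absorbed into the first cluster at most once, giving $O(|C^{s_M}_{R,k}(v_k)|)=O(n)$ critical weights per $k$ and $O(n^2)$ in total, with the same bound on the construction time. Your ``monotone and irreversible absorption'' observation is the correct ingredient; it just has to be applied per reference vertex, not per boundary vertex.
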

If we use Lemma~\ref{lem:minsumsink1}(b) to find a sink for every scenario in ${\cal S}^*$,
then it takes $O(n^3)$ time.
We will design an algorithm to find them in sub-cubic time in Sec.~\ref{sec:sinks},
after some preparations in Sec.~\ref{sec:clusters}.

%%%%%
\section{Clusters}\label{sec:clusters}
Without loss of generality,
we concentrate on ${\cal R}^s$-clusters,
where $s\in {\cal S}^*_L$.
${\cal L}^s$-clusters and $s\in {\cal S}^*_R$ can be treated analogously.
For $k=2,\ldots, n$,
let ${\cal C}^s_{R,k}$ consist of $q^s(k)$ clusters
\begin{equation} \label{eqn:Rclusters}
{\cal C}^s_{R,k} = \langle C^s_{k,1}, C^s_{k,2}, \ldots, C^s_{k,q^s(k)}\rangle,
\end{equation} 
and let $u_{k,i}$ be the front vertex of $C^s_{k,i}$,\footnote{For simplicity,
we omit subscript $R$ (for $R$ight) and superscript $s$ from $u_{k,i}$.
}
where $v_{k}= u_{k,1}\prec \ldots$ $\prec u_{k,q^s(k)}$.
By (\ref{eqn:eqn1}),
the following holds for $i=1, 2, \ldots, q^s(k)-1$.
\begin{equation}
d(u_{k,i},u_{k,i+1}) > \lambda(C^s_{k,i})/c.\label{eqn:eqn1a}
\end{equation}

%%%%%
\subsection{Preprocessing}\label{sec:precomputation}
\begin{lemma}\label{lem:clusters1}
\begin{enumerate}
\item[(a)]
For any scenario $s\in {\cal S}$,
the number of distinct clusters in $\{{\cal C}^s_{R,k}\mid k=2, \ldots, n\}$ is $O(n)$.
\item[(b)]
For any scenario $s\in {\cal S}$,
we can construct $\{{\cal C}^s_{R,k}\mid k=2, \ldots, n\}$ 
in $O(n)$ time.
\end{enumerate}
\end{lemma}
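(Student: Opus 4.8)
The plan is to construct the families $\mathcal{C}^s_{R,k}$ incrementally for $k=n,n-1,\ldots,2$ by exhibiting a simple recurrence that turns $\mathcal{C}^s_{R,k+1}$ into $\mathcal{C}^s_{R,k}$; both parts then fall out of this one recurrence.

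The crux is that the separation condition (\ref{eqn:eqn1a}) between two adjacent clusters is intrinsic: it involves only the two front vertices and the weight of the left cluster, and not the observation edge. Since every evacuee travels leftward at the same rate $\tau$, the temporal gap between two consecutive clusters that have not merged is preserved as the flow advances from $v_{k+1}$ past $v_k$ onto $e_{k-1}$. I would use this to show that $\mathcal{C}^s_{R,k}$ is obtained from $\mathcal{C}^s_{R,k+1}$ by exactly two operations: (i) prepend a new leftmost cluster $\{v_k\}$ of weight $w^s(v_k)$, and (ii) let this new front cluster absorb a (possibly empty) prefix of $\mathcal{C}^s_{R,k+1}$ through a cascade of merges. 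No merge can be triggered strictly inside $\mathcal{C}^s_{R,k+1}$, because the gaps among those clusters are unchanged; the cascade is therefore a contiguous left prefix, which I would verify by testing (\ref{eqn:eqn1a}) for the growing front cluster against the next front $u'$ using $d(v_k,u')=d(v_k,v_{k+1})+d(v_{k+1},u')$ and the running absorbed weight.

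For part (a), I would track the front vertex of each cluster. At step $k$ the prepend-and-cascade creates a single new cluster $C^s_{k,1}$ whose front is $v_k$, while every cluster born before step $k$ is either left untouched (and so recurs with an identical vertex set) or is absorbed and disappears, since only the current front cluster ever initiates a merge and an interior cluster can only be absorbed, never enlarged. Hence each front vertex $v_j$ ($2\le j\le n$) gives rise to exactly one distinct cluster over the whole process, so the total number of distinct clusters in $\{\mathcal{C}^s_{R,k}\}$ is at most $n-1=O(n)$.

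For part (b), I would realize the recurrence with a stack whose top holds the current front (leftmost) cluster, storing for each cluster only its front vertex and total weight $\lambda$. Passing from $\mathcal{C}^s_{R,k+1}$ to $\mathcal{C}^s_{R,k}$ amounts to pushing the singleton $\{v_k\}$ and then, while the stack holds at least two clusters whose top pair violates (\ref{eqn:eqn1a}), popping both and pushing their union. A merge permanently consumes clusters that are never recreated, so by part (a) there are only $O(n)$ merges in all; together with the $n-1$ singleton pushes this bounds the number of stack operations, and hence the running time and space, by $O(n)$. The step I expect to be the main obstacle is the intrinsic-gap claim underpinning the recurrence: one must argue carefully, from the confluent dynamic-flow semantics, that shifting the observation edge leftward neither creates nor destroys any interior gap and that the only possible merges form the front cascade. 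Everything else, namely the counting in (a) and the amortized $O(n)$ bound in (b), is routine once this recurrence is in hand.
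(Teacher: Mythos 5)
Your proposal is correct and follows essentially the same route as the paper: process $k$ from $n$ down to $2$, observe that ${\cal C}^s_{R,k}$ arises from ${\cal C}^s_{R,k+1}$ by prepending a cluster fronted by $v_k$ that absorbs a prefix of the previous sequence, count one new cluster per $k$ for part (a), and charge each absorption to the disappearing cluster for the amortized $O(n)$ bound in part (b). Your explicit justification that interior gaps are preserved when the observation edge shifts left (so merges form only a front cascade) is a point the paper leaves implicit, but it does not change the argument.
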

\begin{proof}
(a) Consider ${\cal C}^s_{R,k}$ in the order $k=n,n-1 \ldots, 2$.
${\cal C}^s_{R,v_n}$ consists of one cluster consisting just of $v_n$.
Let
${\cal C}^s_{R,k+1} = \langle C^s_{k+1,1}, C^s_{k+1,2}, \ldots, C^s_{k+1,q^s(k+1)}\rangle$
for some $k\le n-1$.
The first cluster $C^s_{k,1} \in {\cal C}^s_{R,k}$ contains vertex $v_{k}$
and possibly $C^s_{k+1,1}, \ldots, C^s_{k+1,h}$, 
where $0\le h \le q^s(k+1)$.
$h=0$ means $C^s_{k,1}$ contains just $v_{k}$ and no other vertex.
Note that $C^s_{k,1}$ is new, but the other clusters of ${\cal C}^s_{R,k}$, i.e.,
$C^s_{k,2}, \ldots, C^s_{k,q^s(k)}$ are $C^s_{k+1,h+1}, \ldots, C^s_{k+1,q^s(j)}$,
which are members of ${\cal C}^s_{R,k+1}$.
This means that each $k$ introduces just one new cluster,
and thus the number of distinct clusters is $O(n)$.

(b)
Let us construct ${\cal C}^s_{R,k}$ in the order $k=n,n-1 \ldots, 2$ as in part (a).
Assume that we have computed ${\cal C}^s_{R,k+1}$, and want to compute $C^s_{k,1}$.
If $v_k$ merges with the first $h$ clusters in ${\cal C}^s_{R,k+1}$,
we spend $O(h)$ time in computing $C^s_{k,1}$.
Those $h$ clusters will never contribute to the computing time from now on.
If we pay attention to the front vertex of $C^s_{k,i}$, $u_{k,i}$,
it gets absorbed into a larger cluster at most once,
and each time such an event takes place, constant computation time incurs.
This implies the assertion (b).
\end{proof}
Based on (\ref{eqn:right1}),
we define
\begin{eqnarray}
E^s_{R,k}  &\triangleq& \sum_{C\in {\cal C}^s_{R,k}} d(v_k,v_i)\lambda(C)\tau \label{eqn:extraCost2}\\
I^s_{R,k}  &\triangleq& \sum_{C\in {\cal C}^s_{R,k}} \lambda(C)^2/2c.\label{eqn:intraCost2}
\end{eqnarray}

Computing the extra costs in (\ref{eqn:extraCost2}) is relatively easy,
because it is linear in $\lambda(C)$.
So let us try to compute intra costs efficiently.
As part of preprocessing,
we compute the prefix sum (from left) of the intra costs for the clusters under $s_M$,
and the prefix sum (from right) of the intra costs for the clusters under $s_0$.
To this end we cite the following lemma.
\begin{lemma}{\rm \cite{higashikawa2017a}}\label{lem:EandIcosts}
Given a scenario $s\in {\cal S}$,
\begin{enumerate}
\item[(a)]
We can compute $\{E^s_{R,k}, I^s_{R,k} \mid k=1, \ldots,n-1\}$ in $O(n)$ time.
\item[(b)]
We can compute $\{E^s_{L,k}, I^s_{L,k} \mid k=2, \ldots,n\}$ in $O(n)$ time.
\end{enumerate}
\end{lemma}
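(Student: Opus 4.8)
The plan is to piggyback on the incremental cluster construction of Lemma~\ref{lem:clusters1} and maintain both families by a single right-to-left sweep $k=n,n-1,\ldots,2$, updating each quantity in amortized $O(1)$ time. Recall from the proof of Lemma~\ref{lem:clusters1}(a) that passing from ${\cal C}^s_{R,k+1}$ to ${\cal C}^s_{R,k}$ changes only the head of the sequence: the new front cluster $C^s_{k,1}$ is obtained by fusing $v_k$ with the first $h$ clusters $C^s_{k+1,1},\ldots,C^s_{k+1,h}$ of ${\cal C}^s_{R,k+1}$, while the tail $C^s_{k,2},\ldots,C^s_{k,q^s(k)}$ is literally the tail $C^s_{k+1,h+1},\ldots,C^s_{k+1,q^s(k+1)}$ carried over unchanged. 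The crucial accounting fact, also from that proof, is that the total number of cluster-absorption events across the whole sweep is $O(n)$, since each cluster's front vertex is swallowed at most once.

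For the intra cost (\ref{eqn:intraCost2}) this is immediate. Since only the head changes,
\[
I^s_{R,k} = I^s_{R,k+1} - \frac{1}{2c}\sum_{i=1}^{h}\lambda(C^s_{k+1,i})^2 + \frac{1}{2c}\lambda(C^s_{k,1})^2,
\]
where $\lambda(C^s_{k,1}) = w^s(v_k) + \sum_{i=1}^{h}\lambda(C^s_{k+1,i})$. Evaluating the subtracted sum touches exactly the $h$ clusters absorbed at step $k$, so by the $O(n)$-absorption bound the total work of all these updates telescopes to $O(n)$.

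The extra cost (\ref{eqn:extraCost2}) is the delicate part, because its reference point $v_k$ moves as $k$ decreases, so every surviving cluster's distance term grows by $d(e_k)$. I would first note that the front of $C^s_{k,1}$ is $v_k$ itself, so the head contributes $0$ to $E^s_{R,k}$ and only the tail matters. Writing $d(v_k,u_{k+1,i}) = d(e_k) + d(v_{k+1},u_{k+1,i})$ and letting $\Lambda_{\text{tail}} = \sum_{i=h+1}^{q^s(k+1)}\lambda(C^s_{k+1,i})$ be the total weight surviving outside the new head, I get
\[
E^s_{R,k} = E^s_{R,k+1} + \tau\, d(e_k)\,\Lambda_{\text{tail}} - \tau\!\!\sum_{i=2}^{h} d(v_{k+1},u_{k+1,i})\lambda(C^s_{k+1,i}).
\]
The middle term rescales all survivors for the shifted origin and is computed in $O(1)$ once $\Lambda_{\text{tail}}$ is known, which I maintain via the running total weight $W^s_{R,k}=\sum_{j=k}^{n} w^s(v_j)$ together with the absorbed weights already tallied for the intra-cost update. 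The last term strips off the old extra-cost contributions of precisely the clusters being absorbed into the head (the $i=1$ term vanishes because $u_{k+1,1}=v_{k+1}$), and is again charged to the $O(n)$ absorption events.

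Summing the per-step costs, the non-amortized work per $k$ is $O(1)$ and the amortized work from absorptions is $O(n)$ in total, giving the $O(n)$ bound of part (a); the clusters themselves are built in $O(n)$ by Lemma~\ref{lem:clusters1}(b). Part (b) follows by the mirror-image argument, sweeping $k=1,\ldots,n-1$ with ${\cal L}^s$-clusters and reference points $v_k$ moving left to right. The place I would be most careful is the extra-cost recurrence: keeping the origin-shift term and the absorbed-contribution term from double-counting the head cluster, and verifying that both maintained running sums ($\Lambda_{\text{tail}}$ and the absorbed distance-weighted sum) are exactly what the telescoping needs.
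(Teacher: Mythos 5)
The paper does not prove this lemma at all --- it is imported verbatim from \cite{higashikawa2017a} (``To this end we cite the following lemma''), so there is no in-paper proof to compare against. Your derivation is correct and is exactly the argument one would expect: it reuses the right-to-left incremental construction and the $O(n)$ absorption-event accounting from Lemma~\ref{lem:clusters1}, and your two recurrences for $I^s_{R,k}$ and $E^s_{R,k}$ (head term vanishing since $u_{k,1}=v_k$, origin shift by $d(e_k)$ charged in $O(1)$ via the running tail weight, absorbed clusters charged to the $O(n)$ absorption events) check out. The only blemish is cosmetic: the index ranges in the lemma statement ($k=1,\ldots,n-1$ for the $R$-quantities, $k=2,\ldots,n$ for the $L$-quantities) are swapped relative to the paper's definitions of ${\cal C}^s_{R,k}$ and ${\cal C}^s_{L,k}$, an inconsistency already present in the paper and not a fault of your proof.
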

The following corollary follows easily from Lemmas~\ref{lem:clusters1} and \ref{lem:EandIcosts}.
\begin{corollary}\label{cor:s0sMclusters}
\begin{enumerate}
\item[(a)]
There are $O(n)$ distinct clusters among
$\{{\cal C}^{s_0}_{L,k}, {\cal C}^{s_0}_{R,k},{\cal C}^{s_M}_{L,k}, {\cal C}^{s_M}_{R,k} \mid k =1,\ldots,n\}$,
and we can compute them in $O(n)$ time. 
\item[(b)]
We can compute $\{E^{s_0}_{R,k}, I^{s_0}_{R,k}, E^{s_M}_{R,k}, I^{s_M}_{R,k} \mid k=1, \ldots,n-1\}$
and $\{E^{s_0}_{L,k}, I^{s_0}_{L,k}, E^{s_M}_{L,k}, I^{s_M}_{L,k} \mid k=1, \ldots,n-1\}$ in $O(n)$ time
\item[(c)]
For each cluster sequence in $\{{\cal C}^{s_0}_{L,k}, {\cal C}^{s_0}_{R,k},{\cal C}^{s_M}_{L,k}, {\cal C}^{s_M}_{R,k}\}$
we can compute the prefix sum of intra costs in $O(n)$ time.
Thus we can compute the prefix sums for all $k$ in $O(n^2)$ time.
\end{enumerate}
\end{corollary}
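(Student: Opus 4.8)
The plan is to observe that all three parts are immediate instantiations of Lemmas~\ref{lem:clusters1} and~\ref{lem:EandIcosts}, applied to the two extreme scenarios $s_0$ and $s_M$ in both directions. The only facts I need up front are that $s_0,s_M\in{\cal S}$ (indeed $s_0=\overline{s}_0$ assigns every vertex its lower weight and $s_M=\overline{s}_n$ every vertex its upper weight, so both are legitimate members of ${\cal S}$), and that the analysis of ${\cal R}^s$-clusters carries over verbatim to ${\cal L}^s$-clusters by the symmetry already noted at the start of Sec.~\ref{sec:clusters}. Since there are only four scenario/direction combinations, any $O(n)$-time, $O(n)$-space claim proved for a single scenario propagates to the corollary with the same asymptotics.

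For part~(a), I would apply Lemma~\ref{lem:clusters1}(a) to each of the four families ${\cal C}^{s_0}_{R,\cdot},{\cal C}^{s_0}_{L,\cdot},{\cal C}^{s_M}_{R,\cdot},{\cal C}^{s_M}_{L,\cdot}$: each contributes $O(n)$ distinct clusters, and a constant number of such families still totals $O(n)$. Lemma~\ref{lem:clusters1}(b) then constructs each family in $O(n)$ time, again $O(n)$ in aggregate. For part~(b), I would invoke Lemma~\ref{lem:EandIcosts}(a) with $s=s_0$ and with $s=s_M$ to obtain the four right-sided cost arrays, and Lemma~\ref{lem:EandIcosts}(b) with the same two scenarios for the four left-sided arrays; eight $O(n)$-time invocations give the claimed $O(n)$ bound. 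The index range $k=1,\ldots,n-1$ stated for the left-sided arrays versus the $k=2,\ldots,n$ of Lemma~\ref{lem:EandIcosts}(b) is a harmless relabeling of the same $n-1$ quantities.

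For part~(c), the key observation is that the intra cost $I(C)=\lambda(C)^2/2c$ of a cluster, as defined in~(\ref{eqn:intraCost1}), depends only on its total weight $\lambda(C)$ and not on the point $x$ nor on which sequence ${\cal C}^s_{R,k}$ the cluster currently sits in. Part~(a) hands us each distinct cluster together with its $\lambda(C)$, so each $I(C)$ is available in $O(1)$ time. Hence, for a fixed $k$ and a fixed direction, I would simply traverse the sequence ${\cal C}^s_{R,k}=\langle C^s_{k,1},\ldots,C^s_{k,q^s(k)}\rangle$ in order, maintaining a running sum of the $I(C^s_{k,i})$ (accumulating from the left under $s_M$ and from the right under $s_0$, matching the directional convention of the preprocessing). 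Since $q^s(k)=O(n)$, one sequence costs $O(n)$; ranging over the $O(n)$ values of $k$ and the constant number of direction/scenario combinations yields the $O(n^2)$ total.

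The proof has no genuine obstacle, which is why the corollary is stated as following ``easily.'' The one point that deserves a sentence of care is that in part~(c) the individual sequences overlap heavily: a cluster near $v_n$ may persist unmerged across $\Theta(n)$ consecutive sequences, so $\sum_k q^s(k)$ can be $\Theta(n^2)$. I would therefore \emph{not} attempt to share prefix sums across different $k$; the stated $O(n^2)$ budget already permits recomputing each sequence's running sum independently, which is exactly what the argument above does.
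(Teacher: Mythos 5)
Your proposal is correct and takes essentially the same route as the paper, which offers no explicit proof but states that the corollary ``follows easily'' from Lemmas~\ref{lem:clusters1} and~\ref{lem:EandIcosts} applied to the extreme scenarios $s_0$ and $s_M$ in both directions --- exactly the instantiation you spell out, including the direct $O(n)$-per-sequence accumulation for part~(c). Your closing remark that the $\Theta(n^2)$ total size of the overlapping sequences forces independent recomputation (and that the stated budget permits it) is a correct and worthwhile clarification the paper leaves implicit.
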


Let $\overrightarrow{S}^{s_M}[v_j]$ denote the prefix sum from $v_1$ to $v_j$ under $s_M$.
By Corollary~\ref{cor:s0sMclusters}(c),
we can compute them for $j=1,\ldots, n$ in $O(n^2)$ time.
Similarly, $\overleftarrow{S}^{s_0}[v_j]$, the prefix sum from $v_n$ to $v_{j}$ under $s_0$,
can be computed for $j=1,\ldots, n$ in $O(n^2)$ time.
From now on,
we assume that we have computed
all the data mentioned in Corollary~\ref{cor:s0sMclusters},
as well as these prefix sums.

%%%%%%
\subsubsection{Constructing ${\cal S}^*$}\label{sec:Sstar}
As observed before each scenario $s\in {\cal S}^*_L$ can be specified by
the boundary vertex $v_b$ and its weight $w(v_b)$.
But a cluster also has another parameter $k$,
as can be seen from (\ref{eqn:Rclusters}).
Let us organize this information by index $k$,
and define\footnote{Note that subscript $L$ of ${\cal S}^*_L$ means that
the left side of $v_b$ is max-weighted,
while the subscript $R$ of $\Delta_{R,k}$ refers to ${\cal R}$-clusters.
}
\begin{equation}\label{eqn:delta2}
\Delta_{R,k} \triangleq \{(b_1,\delta_{k,1}),(b_2,\delta_{k,2}),\ldots\},
\end{equation}
where $b_i\ge k$ for each $i$ and $b_1\le b_2\le \cdots$ hold.
Here $(b_i,\delta_{k,i}) \in \Delta_{R,k}$ means that when $w(v_{b_i})=\delta_{k,i}$
two ${\cal R}^s$-clusters w.r.t. $e_{k-1}$ merge.

Fig.~\ref{fig:deMerge}(a) shows the first R-cluster under $s_M$ with respect to $v_k$,
i.e., $C^{s_M}_{R,k}(v_k)$.
Fig.~\ref{fig:deMerge}(b) shows R-clusters under $s_0$ with respect to $v_k$,
\begin{figure}[ht]
\centering
\includegraphics[height=18mm]{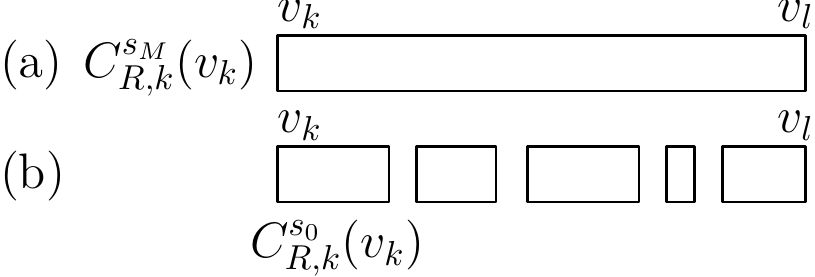}
\caption{(a) $C^{s_M}_{R,k}(v_k)$;
(b)  ${\cal R}^{s_0}$-clusters with respect to $v_k$.
}
\label{fig:deMerge}
\end{figure}
such that the last cluster in Fig.~\ref{fig:deMerge}(b) ends in vertex $v_l$,
which is the last vertex of $C^{s_M}_{R,k}(v_k)$.
Let us start with the clusters in Fig.~\ref{fig:deMerge}(b) and $b=k$.
Suppose we increase $w(v_b)$ by $\delta$ from $\underline{w}(v_b)$
until $C^{s_0}_{R,k}(v_k)$ and the cluster on its right
merge to form a single cluster.
The value of $\delta$ can be obtained by solving
\begin{equation}\label{eqn:delta1}
d(u_{k,1},u_{k,2})= \{\lambda(C^{s_0}_{k,1}) +\delta\}/c.
\end{equation}
If it satisfies
$\underline{w}(v_k) +\delta< \overline{w}(v_k)$,
we can find it in constant time.
Note that for $w^s(v_k)=\lambda(C^{s_0}_{k,1})+\delta$,
$C^{s_0}_{k,1}$ may merge with $C^{s_0}_{k,2}, \ldots, C^{s_0}_{k,h}$,
where $h\ge 2$,
resulting in a combined cluster $C^s_{k,1}$ under $s$,
and the first item $(k, \delta_{k,1})\in \Delta_{R,k}$.
We record $\delta_{k,1}$,
and if $\underline{w}(v_k) +\delta_{k,1}< \overline{w}(v_k)$,
then repeat this operation to find the increment $\delta_{k,2}$, if any, that causes
$C^s_{k,1}$ to merge with $C^{s_0}_{k,h+1}$, etc.
Otherwise, we increment $b$ by one.
When this process terminates,
we will end up with $C^{s_M}_{R,k}(v_k)$,
and we will have constructed $\Delta_{R,k}$ in (\ref{eqn:delta2}).

We formally present the above method to compute\ $\Delta_{R,k}$ as
Algorithm~\ref{alg:alg-1}.
\begin{algorithm}[ht]\label{alg:alg-1}
\KwData {
\begin{compactitem}[--]
\item
${\cal C}^{s_0}_{R,k}=\langle C^{s_0}_{k,1}, C^{s_0}_{k,2}, \ldots, C^{s_0}_{k,q^{s_0}(k)}\rangle$
\item
$v_l=$ the last vertex of $C^{s_M}_{R,k}(v_k)$ \tcp{Precomputed and available}
\item
$\{u_{k,1},u_{k,2}, \ldots, u_{k,q^s(k)}\}$
\end{compactitem}
}
\KwResult {
\begin{compactitem}[--]
\item 
$\Delta_{R,k}$
\end{compactitem}
}
\BlankLine
Set $\Delta_{R,k}=\emptyset$, $C = C^{s_0}_{k,1}$, $b=k$, $h=2$, and $j=1$ ;
 \tcp{\!\!Initialize}
 \While {$b \le l$}{
	\Repeat {$\underline{w}(v_b) + \delta \ge \overline{w}(v_b)$}{ 
	Solve $d(u_{k,1},u_{k,h})=\{\lambda(C) + \delta\}/c$ for $\delta$ \;
		\While{$d(u_{k,1},u_{k,h+1})\le \{\lambda(C\cup C^{s_0}_{k,h}) + \delta\}/c$}{
		$C= C\cup C^{s_0}_{k,h}$  \;
		$h=h+1$
		} %end While
		Set $\delta_{k,j} = \delta$ and add $(b, {\delta}_{k,j})$ to $\Delta_{R,k}$ \;
		$C= C\cup C^{s_0}_{k,h}$  \;
	 	$j=j+1$ \;				
	  } %end Repeat
		$b=b+1$ \;
} %end While
\caption{{\sc Computing} $\Delta_{R,k}$}
\end{algorithm}
Clearly, each item $(b_j,\delta_{k,j}) \in \Delta_{R,k}$ in (\ref{eqn:delta2})
corresponds to a scenario $s_j\in {\cal S}^*_L$ in the following way.
\begin{equation}\label{eqn:Phisx}
w^{s_j}(v_i)= \left\{\begin{array}{lll}
                 &w^{s_M}(v_i)			&\text{~for~} 1\le i< k\\
                 &\underline{w}(v_k)+ \delta_{k,j} 		&\text{~for~} i= k\\
		&w^{s_0}(v_i) 			&\text{~for~} k<i\le n
                 \end{array}
                 \right.
\end{equation}
Let ${\cal S}^*_{L,k}$ be the set of scenarios corresponding to the increments in $\Delta_{R,k}$
according to (\ref{eqn:Phisx}).
Note that under any $s\in {\cal S}^*_{L,k}$,
we have $C^s_{R,k}(v_{b(s)})= C^s_{R,k}(v_k)$.

\begin{lemma}\label{lem:clusters2}
\begin{enumerate}
\item[(a)]
${\cal S}^*_L= \cup^n_{k=1}{\cal S}^*_{L,k}$.
\item[(b)]
Algorithm~\ref{alg:alg-1} runs in $O(|C^{s_M}_{R,k}(v_k)|)$ time,
where $|C|$ denotes the number of vertices in cluster $C$.
\item[(c)]
We can construct $\{\Delta_{R,k}\mid k=2,\ldots, n\}$
in $O(n^2)$ time.
\end{enumerate}
\end{lemma}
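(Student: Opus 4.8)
The plan is to prove the three parts in order, since part (a) establishes the correctness that parts (b) and (c) then merely cost out. For part (a) I would show two inclusions. The inclusion $\bigcup_{k}{\cal S}^*_{L,k}\subseteq{\cal S}^*_L$ is immediate from the construction: every item $(b,\delta_{k,j})$ that Algorithm~\ref{alg:alg-1} records marks an increment of $w(v_b)$ at which the running cluster $C$ absorbs the next ${\cal R}^{s_0}$-cluster (this is exactly what solving (\ref{eqn:delta1}) together with the inner \texttt{while} detects), so the corresponding scenario is an L-pseudo-bipartite scenario at a critical weight, hence a member of ${\cal S}^*_L$. The reverse inclusion is the conceptual heart. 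Given any critical-weight scenario $s\in{\cal S}^*_L$ with boundary vertex $v_b$, increasing $w(v_b)$ enlarges exactly the cluster currently containing $v_b$ (the leftward separation of that cluster depends only on its left neighbour's weight, not on $w(v_b)$), and at the critical value this cluster swallows its right neighbour. Letting $v_k$ be the front vertex of that cluster, I would argue that, viewed with respect to $e_{k-1}$, the cluster is precisely the first cluster $C^s_{R,k}(v_k)$ and the merge is a first-cluster absorption, which is exactly what Algorithm~\ref{alg:alg-1} for index $k$ enumerates; hence $s\in{\cal S}^*_{L,k}$.

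The single fact that legitimises the previous sentence, and which I expect to be the main obstacle, is that the rightward cluster decomposition starting from a front vertex is \emph{independent of the reference edge} from which the flow is observed. Concretely, because $v_k$ is a front vertex there is a gap immediately before it, so the evacuees emanating from $v_k,v_{k+1},\dots$ form a fresh burst whose internal gap structure is governed by the separation inequality (\ref{eqn:eqn1a}) — which involves only the weight of the left cluster and the inter-front distance, and never the weight of the vertex being absorbed. Moving the observation point further left merely delays every arrival uniformly and preserves all gaps. I would make this precise by an induction on the cluster index showing that $C^s_{R,j}(v_i)=C^s_{R,k}(v_i)$ for all $i\ge k$ whenever $v_k$ is a front vertex with respect to $e_{j-1}$ with $j\le k$, so that the merge witnessed under $s$ against the original reference coincides with the first-cluster merge recorded in $\Delta_{R,k}$, completing ${\cal S}^*_L\subseteq\bigcup_k{\cal S}^*_{L,k}$.

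For part (b) I would charge the running time of Algorithm~\ref{alg:alg-1} to cluster absorptions and to the sweep of $b$. The outer \texttt{while} advances $b$ from $k$ to $l$, i.e.\ $l-k+1=|C^{s_M}_{R,k}(v_k)|$ iterations, each with $O(1)$ overhead and a constant-time solution of (\ref{eqn:delta1}). Every recorded item and every iteration of the inner \texttt{while} performs one absorption $C\leftarrow C\cup C^{s_0}_{k,h}$ and advances the pointer $h$ monotonically into the list of ${\cal R}^{s_0}$-clusters; since $h$ never retreats and stops within $C^{s_M}_{R,k}(v_k)$, whose ${\cal R}^{s_0}$-clusters number at most $|C^{s_M}_{R,k}(v_k)|$, the total absorption work is $O(|C^{s_M}_{R,k}(v_k)|)$, giving the stated bound.

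For part (c) I would aggregate part (b) over $k$. The inputs required by Algorithm~\ref{alg:alg-1} — the sequences ${\cal C}^{s_0}_{R,k}$, the front vertices $u_{k,i}$, and the last vertex $v_l$ of $C^{s_M}_{R,k}(v_k)$ — are all available from the $O(n)$-time constructions of Lemma~\ref{lem:clusters1} and Corollary~\ref{cor:s0sMclusters}. Running Algorithm~\ref{alg:alg-1} for each $k=2,\dots,n$ then costs $\sum_{k=2}^{n}O(|C^{s_M}_{R,k}(v_k)|)=O(n^2)$, because each term is $O(n)$; this dominates the $O(n)$ preprocessing and yields the claim.
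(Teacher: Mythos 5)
Your proposal is correct and follows essentially the same route as the paper: for parts (b) and (c) you use the same accounting (constant work charged to each ${\cal C}^{s_0}_{R,k}$-cluster absorbed into $C^{s_M}_{R,k}(v_k)$, plus the sweep of $b$ over that cluster's vertices, summed over $k$). For part (a) the paper merely declares the claim obvious; your reference-edge-independence argument --- that the clusters starting at or to the right of a front vertex are inherited unchanged as the reference edge moves left, which is exactly the recursive structure already exhibited in the proof of Lemma~\ref{lem:clusters1}(a) --- correctly supplies the detail the paper omits.
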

\begin{proof}
(a) This is obvious.

(b) We can carry out each step inside the repeat loop of Algorithm~\ref{alg:alg-1} 
in constant time.
We thus spend constant time per cluster of ${\cal C}^{s_0}_{R,k}$ that is contained in $C^{s_M}_{R,k}(v_k)$.

(c) Follows immediately from part (b),
since $O(|C^{s_M}_{R,k}(v_k)|)=O(n)$.
\end{proof}

%%%%%%
\subsection{Computing $\Phi^s(v_i)$ for $s\in {\cal S}^*$}\label{sec:Phivi}
Let us now turn our attention to the computation of the extra and intra costs at vertices at the time
when a merger occurs,
namely under the scenarios in ${\cal S}^*_{L,k}$.
While computing $\Delta_{R,k}$ as in Sec.~\ref{sec:precomputation},
we can update the extra and intra costs at $v_k$ under the corresponding scenario in $s\in \tilde{\cal S}^*_{L,k}$
as follows.

When the first increment $\delta_{k,1}$ causes the merger of the first two clusters
$C^{s_0}_{k,1}$ and $C^{s_0}_{k,2}$, for example,
we subtract the extra cost contributions of $C^{s_0}_{k,1}$ and $C^{s_0}_{k,2}$ from $E^{s_0}_{R,k}$,
and add the new contribution from the merged cluster
in order to compute $E^s_{R,k}$ for the new scenario $s$ that results from the incremented weight 
$\underline{w}^s(v_k) =\underline{w}(v_k) +\delta_{k,1}$.
We can similarly compute $I^s_{R,k}$ from $I^{s_0}_{R,k}$ in constant time.
Carrying out these operations whenever a new merged cluster is created thus takes $O(n)$ time
for a given $k$ and $O(n^2)$ time in total for all $k$'s.

Recall the definition of $\overrightarrow{S}^{s_M}[v_j]$ and $\overleftarrow{S}^{s_0}[v_j]$
after Corollary~\ref{cor:s0sMclusters}.
\begin{lemma}\label{lem:costAtv_i}
Assume that all the data mentioned in Corollary~\ref{cor:s0sMclusters} are available.
Then under any given scenario $s\in {\cal S}^*_L$,
we can compute the following in constant time.
\begin{enumerate}
\item[(a)]
$\Phi^s(v_i)= \Phi^s_L(v_i)+\Phi^s_R(v_i)$ for any given index $i$.
\item[(b)]
$\Phi^s(x)= \Phi^s_L(x)+\Phi^s_R(x)$ for any given point $x$.
\end{enumerate}
\end{lemma}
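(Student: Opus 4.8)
The plan is to split $\Phi^s(v_i)=\Phi^s_L(v_i)+\Phi^s_R(v_i)$ and to exploit that a scenario $s\in{\cal S}^*_L$ with boundary vertex $v_b$ is max-weighted strictly to the left of $v_b$ and min-weighted strictly to its right. First I would dispose of the two ``easy'' halves. For $i\le b$ the vertices $v_1,\ldots,v_{i-1}$ feeding $\Phi^s_L(v_i)$ all carry their $s_M$-weights, so the ${\cal L}^s$-clusters w.r.t.\ $e_{i-1}$ coincide with the ${\cal L}^{s_M}$-clusters and $\Phi^s_L(v_i)=\Phi^{s_M}_L(v_i)$; symmetrically, for $i\ge b$ we get $\Phi^s_R(v_i)=\Phi^{s_0}_R(v_i)$. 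Each is obtained in $O(1)$ from the precomputed $E^{s_M}_{L,\cdot},I^{s_M}_{L,\cdot}$ (resp.\ $E^{s_0}_{R,\cdot},I^{s_0}_{R,\cdot}$) by the translation identity implicit in (\ref{eqn:right1})--(\ref{eqn:left1}): moving the evaluation point a known distance shifts the extra cost by $\tau\,d\cdot W$, where the relevant total weight $W$ is read off from $\overline{W}[\cdot]$, $\underline{W}[\cdot]$ and the known $w^s(v_b)$, while the intra cost is unchanged. At $i=b$ both halves are easy, giving $\Phi^s(v_b)=\Phi^{s_M}_L(v_b)+\Phi^{s_0}_R(v_b)$.

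The substance is the remaining half, e.g.\ $\Phi^s_R(v_i)$ for $i<b$, which involves $v_{i+1},\ldots,v_{b-1}$ (max), $v_b$ (intermediate) and $v_{b+1},\ldots,v_n$ (min) simultaneously. The key structural fact I would establish is that an ${\cal R}^s$-cluster is grown by a left-to-right sweep whose continuation depends only on the vertices already absorbed (cf.\ (\ref{eqn:eqn1a}) and the merging argument of Lemma~\ref{lem:clusters1}); hence the ${\cal R}^s$-clusters w.r.t.\ $e_i$ agree with the ${\cal R}^{s_M}$-clusters for every cluster lying entirely in $[v_{i+1},v_{b-1}]$ and with the ${\cal R}^{s_0}$-clusters for every cluster lying entirely in $[v_{b+1},v_n]$, so only a single \emph{boundary cluster} straddling $v_b$ can differ. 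Writing $v_f,v_g$ ($f\le b\le g$) for its front and last vertex, I would assemble $\Phi^s_R(v_i)$ as three parts: the cost of the ${\cal R}^{s_M}$-clusters with front in $[v_{i+1},v_{f-1}]$, the cost of the boundary cluster, and the cost of the ${\cal R}^{s_0}$-clusters with front in $[v_{g+1},v_n]$. The first and third are differences of the precomputed prefix sums $\overrightarrow{S}^{s_M}[\cdot]$, $\overleftarrow{S}^{s_0}[\cdot]$ plus the linear extra-cost terms, and the middle is evaluated directly from $v_f,v_g$ and $w^s(v_b)$. The case $i>b$ for $\Phi^s_L(v_i)$ is symmetric, using ${\cal L}^{s_M}$ and ${\cal L}^{s_0}$ data.

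Part (b) then reduces to part (a). For $x$ on an open edge $(v_k,v_{k+1})$ we have $\Phi^s(x)=\Phi^s_{L,k}(x)+\Phi^s_{R,k+1}(x)$, and by (\ref{eqn:right1})--(\ref{eqn:left1}) each summand is linear in $x$ with slope $\tau$ times the total weight on its side; anchoring at the bracketing vertices via the part-(a) quantities and adding the linear correction yields $\Phi^s(x)$ in $O(1)$. (The discontinuities at vertices noted after Example~\ref{ex:ex1} are handled by selecting the correct one of the two cases in the piecewise definition of $\Phi^s(\cdot)$.)

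The hard part will be making the boundary cluster computable in $O(1)$, i.e.\ locating $v_f$ and $v_g$ for the query edge $e_i$ without re-clustering. I would first argue that $v_f$ is independent of $w^s(v_b)$ — whether $v_b$ joins the cluster grown from its left depends only on the (max) weights of $v_f,\ldots,v_{b-1}$ — so $v_f$ is the front of the ${\cal R}^{s_M}$-cluster w.r.t.\ $e_i$ containing $v_b$, and dually $v_g$ is governed by the min-weighted tail as under $s_0$. The delicate point is that advancing the reference edge $e_i$ can split a cluster, so $v_f$ is \emph{not} simply $\max\{v_{i+1},(\text{natural front of }v_b)\}$; pinning it down in constant time requires a query structure built on the nested (laminar) families $\{{\cal C}^{s_M}_{R,k}\}_k$ and $\{{\cal C}^{s_0}_{R,k}\}_k$ guaranteed by Lemma~\ref{lem:clusters1}, answering ``front/back of the cluster containing $v_b$ w.r.t.\ $e_i$'' in $O(1)$ after preprocessing that stays within the budget already used in Corollary~\ref{cor:s0sMclusters}. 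Verifying this reduction, and the exact cost bookkeeping for the straddling cluster, is where I expect the real work to lie.
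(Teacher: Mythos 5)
Your overall decomposition coincides with the paper's: the half of the cost lying on the max-weighted (resp.\ min-weighted) side of $v_b$ is read off from the precomputed $s_M$ (resp.\ $s_0$) data; part (b) reduces to part (a) via the linear shift $\Phi^s_R(x)=d(x,v_i)(W^s[v_n]-W^s[v_{i-1}])\tau+\Phi^s_R(v_i)$, exactly as in the paper; and the only genuinely scenario-dependent object in the remaining half is the single cluster of ${\cal C}^s_{R,i}$ that straddles $v_b$, flanked on the left by clusters identical to ${\cal R}^{s_M}$-clusters and on the right by clusters identical to ${\cal R}^{s_0}$-clusters, whose intra costs are available as prefix/suffix sums. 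Your argument that the front $v_f$ of the straddling cluster agrees with the front of $C^{s_M}_{R,i}(v_b)$ is also sound. Up to this point you and the paper agree.

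The gap is precisely the step you defer to ``where I expect the real work to lie'': you never produce the $O(1)$-time localization of the straddling cluster, and the sketch you offer for it would fail. Its last vertex $v_g$ is \emph{not} ``governed by the min-weighted tail as under $s_0$'': how far the cluster extends past $v_b$ depends on the accumulated max weight of $v_f,\ldots,v_{b-1}$ \emph{and} on the intermediate weight $w^s(v_b)$ itself, so $v_g$ varies with $w^s(v_b)$ and is not a boundary visible in either laminar family $\{{\cal C}^{s_M}_{R,k}\}$ or $\{{\cal C}^{s_0}_{R,k}\}$; no query structure built only on those two families can return it, and tabulating $v_g$ for every pair (query index $i$, scenario $s$) would cost $\Theta(n^3)$. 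The device the paper uses instead, and which your proposal is missing, exploits that every $s\in{\cal S}^*_L$ is a \emph{critical-weight} scenario: the merged cluster $C^s_{R,k}(v_b)$ and its costs $E^s_{R,k},I^s_{R,k}$ are emitted as a by-product of Algorithm~\ref{alg:alg-1}, and for a query index $i$ one brackets $w^s(v_b)$ between consecutive critical weights $w_1\le w^s(v_b)<w_2$ of $\Delta_{R,i}$ --- over which the partition ${\cal C}_{R,i}$ does not change --- takes the stored costs at the scenario with $w(v_b)=w_1$, and corrects the extra cost linearly and the intra cost of the one affected cluster by $\{(\lambda+w^s(v_b)-w_1)^2-\lambda^2\}/2c$. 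Without this anchoring at the critical weights recorded in $\Delta_{R,i}$ (or an equivalent device), the constant-time claim of the lemma is not established by your argument.
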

\begin{proof}
(a) 
Let us compute $\Phi^s(v_i)$,
where $v_k\prec v_i \prec v_{b(s)}$.
We already have $\Phi^s_L(v_i)=\Phi^{s_M}_L(v_i)$ available,
so we need $\Phi^s_R(v_i)$.
The difference between ${\cal C}^s_{R,k}$ and ${\cal C}^s_{R,i}$ is illustrated in Fig.~\ref{fig:computePhi}.
Note that the cluster $C^s_{R,k}(v_i)$ may start before $v_i$,
while $C^s_{R,i}(v_i)$ starts at $v_i$.
See the green frames in Fig.~\ref{fig:computePhi}.
Thus more than one cluster of ${\cal C}^s_{R,i}$ may belong to the same cluster in ${\cal C}^s_{R,k}$,
as shown in the figure.
We first determine the last vertex of $C^s_{R,k}(v_i)$ and let it be $v_l$.

Note that the prefix sums were computed only for the critical weights of $v_b$,
and the critical weights for $v_b$ are not the same for ${\cal C}^s_{R,k}$ and ${\cal C}^s_{R,i}$. 
When we use the prefix sum at $v_i$ for the clusters in ${\cal C}^s_{R,i}$,
we should change the weight $w(v_b)$ in Fig.~\ref{fig:computePhi}(b) to that in Fig.~\ref{fig:computePhi}(a). 
This can be done by replacing the intra cost at $v_a$ in Fig.~\ref{fig:computePhi}(b) by that
in Fig.~\ref{fig:computePhi}(a).

There is another possibility that is not covered by Fig.~\ref{fig:computePhi},
namely $C^s_{R,k}(v_i)=C^s_{R,k}(v_b)$,
but $C^s_{R,i}(v_i) \not=C^s_{R,i}(v_b)$ or $C^s_{R,i}(v_i)=C^s_{R,i}(v_b)$.
In this case, there is no vertex $v_a$ in Fig.~\ref{fig:computePhi}(a).
Search for $w^s(v_b)$ among the critical weights for $w(v_b)$ with respect to $v_i$
in Fig.~\ref{fig:computePhi}(b),
and let $w_1 \le w^s(v_b) <w_2$. 
Let $s'$ be the scenario such that  $w^{s'}(v_b)=w_1$, 
and determine the cluster $C^{s'}_{R,i}(v_b)$.
Let $s''$ that results from $s'$ by increasing $w(v_b)$ from $w^{s'}(v_b)~(=w_1)$
to $w^s(v_b)$.
This increase does not affect $C^{s'}_{R,i}(v_b)$,
except that $\Phi^{s''}_R(v_i) >\Phi^{s'}_R(v_i)$ due to the increases in the extra and intra costs.
It is straightforward to compute the increased extra cost.
It is easy to see that $I(C^{s''}_{R,i}(v_b))=\lambda(C^s_{R,k}(v_i))\}^2/2c$.
\begin{figure}[htb]
\centering
\subfigure[${\cal C}^s_{R,k}$]{\includegraphics[height=8mm]{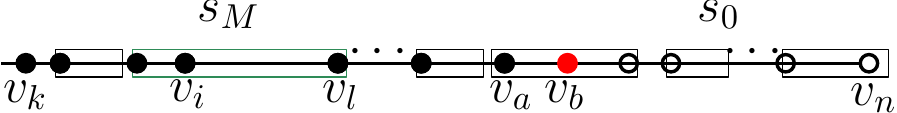}}
\subfigure[${\cal C}^s_{R,i}$]{\includegraphics[height=8mm]{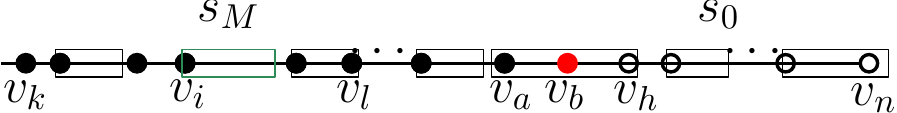}}
\caption{Illustration for the proof of Lemma~\ref{lem:costAtv_i}.
}
\label{fig:computePhi}
\end{figure}
Assume now that Fig.~\ref{fig:computePhi}(b) shows the clusters of ${\cal C}^{s''}_{R,i}$,
which are the same as those of ${\cal C}^{s'}_{R,i}$,
where $v_a=v_i$ is possible.
We can compute $I(C^{s''}_{R,i}(v_b))$ as follows.
\[
I(C^{s''}_{R,i}) = I(C^{s'}_{R,i}) + \{\lambda(C^{s''}_{R,i}(v_b))^2 - \lambda(C^{s'}_{R,i}(v_b))^2\}/2c,
\]
where
\[
\lambda(C^{s''}_{R,i}(v_b))= \lambda(C^{s'}_{R,i}(v_b))+ w^s(v_b) -w_1.
\]
Note that all this takes constant time under the assumption of the lemma.

(b)
Note that we have for $v_{i-1}\preceq x \prec v_i$,
\begin{eqnarray}
\Phi^s_R(x) &=& d(x,v_i)(W^s[v_n] - W^s[v_{i-1}] )\tau + E^s_{R,i} + I^s_{R,i}\nonumber\\ 
		&=& d(x,v_i)(W^s[v_n] - W^s[v_{i-1}] )\tau + \Phi^s_R(v_i), \label{eqn:right2}
\end{eqnarray}
and we can compute $\Phi^s_R(v_i)$ in constant time by part (a).
It is clear that the first term can be computed in constant time.
We can similarly compute $\Phi^s_L(x)$ in constant time.
\end{proof}

%%%%
\section{Computing sinks $\{\mu^s\mid s\in {\cal S}^*\}$}\label{sec:sinks}
Among the increments in $\Delta_R \triangleq \{\Delta_{R,k}\mid k=2,\ldots, n\}$,
there is a natural lexicographical order,
ordered first by $b$ and then by $w(v_b)$,
from the smallest to the largest.
We write $s\lessdot s'$ if $s$ is ordered before $s'$ in this order.
In what follows we assume the items in $\Delta_R$ are sorted by $\lessdot$.

%%%%
\subsection{Tracking $\mu^s$}\label{sec:backNforth}
Observe that we have $\Phi^{s}_L(x)=\Phi^{s_M}_L(x)$ for $x\preceq v_b$,
which is independent of $w(v_b)$.
Similarly, we have $\Phi^{s}_R(x)=\Phi^{s_0}_R(x)$ for $x\succeq v_b$,
which is also independent of $w(v_b)$.
We precomputed the piecewise linear function $\Phi^{s_M}_L(x)$ for $x\preceq v_b$,
and $\Phi^{s_0}_R(x)$ for $x\succeq v_b$,
which are independent of $w(v_b)$.
We initialize the current scenario by $s=s_0$,
the boundary vertex $v_b$ by $b=1$,
its weight $w(v_b)= w^{s_0}(v_1)$.
For each successive increment in $\Delta_R$, 
from the smallest (according to $\lessdot$),
we want to know the leftmost (aggregate time) sink under the corresponding scenario.

It is possible that,
as we increase the weight $w(v_b)$,
the sink may jump across $v_b$ from its right side to its left side,
and vice versa, back and forth many times.
We shall see how this can happen below.

By Lemma~\ref{lem:costAtv_i},
for a given index $b$,
we can compute $\{\Phi^{\overline{s}_{b-1}}(v_i) \mid i=1,2,\ldots, n\}$ in $O(n)$ time.\footnote{Recall
the definition of $\overline{s}_j$ from Sec.~\ref{sec:defs}.
}
We first scan those costs at $v_i$ for $i=b, b-1,\ldots, 1$,
and whenever we encounter a vertex with cost smaller than those we examined so far,
we record the index of the vertex.
Let ${\cal I}^b_L$ be the recorded index set.
We then scan those costs at $v_i$ for $i=b+1,\ldots, n$,
and whenever we encounter a vertex with cost smaller than those we examined so far,
we the index of the vertex,
and let ${\cal I}^b_R$ be the recorded index set.
We now plot $p_i=(v_i,\Phi^{\overline{s}_{b-1}}(v_i))$ for $i\in {\cal I}^b_L\cup {\cal I}^b_R$
in the $x$-$y$ coordinate system,
with $v_i$ as the $x$ value and $\Phi^{\overline{s}_{b-1}}(v_i)$ as the $y$ value.
See Fig.~\ref{fig:sinkMovement}.
\begin{figure}[ht]
\centering
\includegraphics[height=32mm]{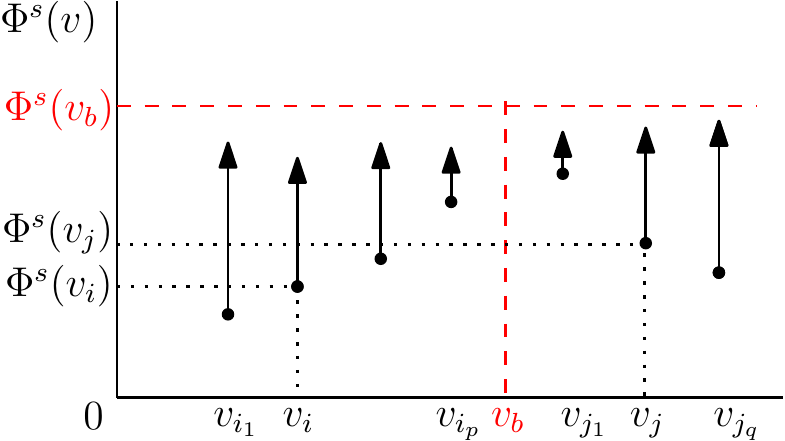}
\caption{2-dimensional representation of
$\Phi^{\overline{s}_{b-1}}(v_i)=\Phi_L^{\overline{s}_{b-1}}(v_i)+\Phi_R^{\overline{s}_{b-1}}(v_i)$.
}
\label{fig:sinkMovement}
\end{figure}
It is clear that for $i,j \in {\cal I}^b_L$,
we have $\Phi^{\overline{s}_{b-1}}(v_i) < \Phi^{\overline{s}_{b-1}}(v_j)$ if $i<j$,
and for $i,j \in {\cal I}^b_R$,
we have $\Phi^{\overline{s}_{b-1}}(v_i) > \Phi^{\overline{s}_{b-1}}(v_j)$ if $i<j$.
Therefore, the points plotted on the left (resp. right) side of $v_b$ get higher and higher
as we approach $v_b$ from left (resp. right),
as can be seen in Fig.~\ref{fig:sinkMovement}.

Note that for a vertex $v_i\prec v_b$,
as $w(v_b)$ is increased,
$\Phi^{s}_R(v_i)$ increases,
while $\Phi^{s}_L(v_i)$ remains fixed at $\Phi^{s_M}_L(v_i)$.
For $v_i\succ v_b$, on the other hand,
as $w(v_b)$ is increased,
$\Phi^{s}_L(v_i)$ increases,
while $\Phi^{s}_R(v_i)$ remains fixed at $\Phi^{s_0}_R(v_i)$.
A vertical arrow in Fig.~\ref{fig:sinkMovement} indicates the amount
of increase in the cost of the corresponding vertex when $w(v_b)$ is increased
by a certain amount.
Note that the farther away a vertex is from $v_b$,
the more is the increase in the cost.

The following proposition follows from the above observations.
\begin{proposition}\label{prop:invariants}
\begin{enumerate}
\item[(a)]
$\Phi^s(v_i) \le \Phi^s(v_j)$ holds for any pair $i,j\in {\cal I}^b_L$ such that $i<j$.
\item[(b)]
$\Phi^s(v_i) \ge \Phi^s(v_j)$ holds for any pair $i,j\in {\cal I}^b_R$ such that $i<j$.
\item[(c)]
Either the vertex with the smallest index in ${\cal I}^b_L$
or the vertex with the largest index in ${\cal I}^b_R$ is a sink, i.e., it has the lowest cost.
\end{enumerate}
\end{proposition}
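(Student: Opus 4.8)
The plan is to read the three parts off the rule that builds ${\cal I}^b_L$ and ${\cal I}^b_R$, supplemented by the monotonicity noted just above: raising $w(v_b)$ increases $\Phi^s(v_i)$ by an amount that grows with $d(v_i,v_b)$ and leaves $\Phi^s(v_b)$ unchanged. Parts (a) and (b) I would dispatch quickly as the defining property of a running minimum, and reserve the real work for the domination fact that powers (c). Throughout I would invoke Lemma~\ref{lem:atVertex} at the end to restrict the search for a sink to vertices.

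For (a), fix $i<j$ with $i,j\in{\cal I}^b_L$. The scan that defines ${\cal I}^b_L$ runs from $v_b$ leftward and records a vertex only when its cost undercuts every cost seen so far, so the later (smaller-index) vertex $v_i$ became a strict new minimum after $v_j$ had already been examined; hence the recorded costs strictly increase with the index, giving (a). The mirror scan yields $\Phi(v_i)>\Phi(v_j)$ for $i<j$ in ${\cal I}^b_R$, which is (b). The substance is (c), which I would obtain from the following domination claim, valid for every admissible weight $w(v_b)$: if $v_i\preceq v_b$ and $v_i\notin{\cal I}^b_L$, then some $v_j\in{\cal I}^b_L$ with $j>i$ satisfies $\Phi^s(v_j)\le\Phi^s(v_i)$. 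To prove it I would route the comparison through the common reference $v_j$: since both vertices lie left of $v_b$, their left parts are frozen at $\Phi^{s_M}_L$, and writing $\Phi^s_R(v_i)=\Phi^s_R(v_j)+d(v_i,v_j)\bigl(W^s[v_n]-W^s[v_j]\bigr)\tau+(\text{cost of }P[v_{i+1},v_j]\text{ to }v_i)$ exhibits the whole $w(v_b)$-dependence of $\Phi^s(v_i)-\Phi^s(v_j)$ in the middle term. Its derivative in $w(v_b)$ is exactly $d(v_i,v_j)\tau\ge 0$, so this difference is non-decreasing in $w(v_b)$; as $v_i\notin{\cal I}^b_L$ already forces $\Phi^{\overline{s}_{b-1}}(v_i)-\Phi^{\overline{s}_{b-1}}(v_j)\ge 0$ at the base scenario, the inequality persists for all larger $w(v_b)$.

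Granting the domination claim, $\min_{k\le b}\Phi^s(v_k)$ is always attained inside ${\cal I}^b_L$, and the symmetric argument (now with the left part carrying the $w(v_b)$-dependence) puts $\min_{k>b}\Phi^s(v_k)$ inside ${\cal I}^b_R$. Since a sink is a vertex, the global minimizer is the better of these two per-side minima, i.e. it sits at an extreme candidate of ${\cal I}^b_L$ or of ${\cal I}^b_R$; this is (c). A pleasant by-product is that one may build the two sets once under $\overline{s}_{b-1}$ and still trust them for every $s$ with boundary $v_b$, which is what the constant-time cost evaluations of Lemma~\ref{lem:costAtv_i} are meant to exploit.

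I expect the one genuine obstacle to be the robustness of the ``farther vertex gains at least as much cost'' step across the cluster mergers that occur as $w(v_b)$ climbs, since each $\Phi^s(v_k)$ is only piecewise linear in $w(v_b)$ and the cluster seen from $v_i$ differs from the one seen from $v_j$ when a cluster straddles $v_j$. The common-reference decomposition above is precisely what tames this: a marginal evacuee added at $v_b$ waits behind the same flow whether its destination is $v_i$ or $v_j$, so its contribution to $\Phi^s_R(v_i)$ exceeds its contribution to $\Phi^s_R(v_j)$ by exactly the extra transit $d(v_i,v_j)\tau$, independently of how the intervening clusters have merged, and the intra-cost jumps are identical for the two destinations and therefore cancel in the difference. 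Verifying this cancellation carefully for straddling clusters is the only place I would slow down.
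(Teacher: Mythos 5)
Your overall route is the same one the paper takes: the paper offers no formal proof, only the remark that the proposition ``follows from the above observations,'' namely the running-minimum construction of ${\cal I}^b_L,{\cal I}^b_R$ for (a) and (b), and the fact that increasing $w(v_b)$ raises the cost of a vertex by an amount that grows with its distance from $v_b$ for (c). Your write-up supplies the substance the paper leaves implicit --- the domination claim and its persistence as $w(v_b)$ grows --- which is exactly what part (c) needs.

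One step is wrong as stated, though repairable. The exact decomposition $\Phi^s_R(v_i)=\Phi^s_R(v_j)+d(v_i,v_j)\bigl(W^s[v_n]-W^s[v_j]\bigr)\tau+(\cdots)$, and the ensuing claims that the derivative of $\Phi^s(v_i)-\Phi^s(v_j)$ in $w(v_b)$ is \emph{exactly} $d(v_i,v_j)\tau$ and that the intra-cost jumps ``are identical for the two destinations and therefore cancel,'' do not hold in this model. The cluster of ${\cal C}^s_{R,i+1}$ containing $v_b$ is in general a strict superset of the cluster of ${\cal C}^s_{R,j+1}$ containing $v_b$, because leading clusters merge as the viewpoint moves left. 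Hence a unit of weight added at $v_b$ contributes $d(v_i,u_i)\tau+\lambda(C_i)/c$ to $\Phi^s_R(v_i)$ but $d(v_j,u_j)\tau+\lambda(C_j)/c$ to $\Phi^s_R(v_j)$, where $u_i\preceq u_j$ are the respective front vertices and $C_i\supseteq C_j$; neither the extra-cost terms nor the intra-cost terms differ by exactly $d(v_i,v_j)\tau$. What saves the argument is that both discrepancies have the right sign: $d(v_i,u_i)-d(v_j,u_j)\ge d(v_i,v_j)$ and $\lambda(C_i)\ge\lambda(C_j)$, so $\Phi^s(v_i)-\Phi^s(v_j)$ is still non-decreasing in $w(v_b)$, which is all you actually use. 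You should state the monotonicity as an inequality rather than an identity. Separately, close the small induction gap in the domination claim: a vertex ejected from ${\cal I}^b_L$ is dominated by its successor at the moment of ejection, but that successor may itself be ejected later, so persistence must be chained through the sequence of removals --- each link compares a vertex with one closer to $v_b$, so the same monotonicity applies at every step and the chain closes.
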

Note that the cost at $v_b$ is not affected by the change in $w(v_b)$ and remains the same.
We consider the three properties in Proposition~\ref{prop:invariants} as {\em invariant} properties,
and remove the vertices that do not satisfy (a) or (b).
As we increase $w(v_b)$,
in the order of the sorted increments in $\Delta_R$,
we update ${\cal I}^b_L$ and ${\cal I}^b_R$,
looking for the change of the sink.
By property (c), 
the sink cannot move away from $v_b$.
We now make an obvious observation.

\begin{proposition}\label{prop:sameSink}
As $w(v_b)$ is increased,
there is a sink at the same vertex for all increments tested since the last time the sink moved,
until the smallest index in ${\cal I}^b_L$ or
the largest index in ${\cal I}^b_R$ changes,
causing the sink to move.
\end{proposition}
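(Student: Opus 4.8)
The plan is to recognize that Proposition~\ref{prop:sameSink} is essentially a direct corollary of the invariant structure established in Proposition~\ref{prop:invariants}, combined with the monotonicity observations made in the text immediately preceding it. The statement claims that the sink vertex stays fixed throughout a contiguous run of weight increments, changing only at the precise moment when either the smallest index in ${\cal I}^b_L$ or the largest index in ${\cal I}^b_R$ changes. So the proof is really an argument that ``sink location is a function only of these two extremal indices,'' and that between index-changes, nothing relevant moves.

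**First I would** invoke Proposition~\ref{prop:invariants}(c), which asserts that the sink is always realized either at the vertex with the smallest index in ${\cal I}^b_L$ or at the vertex with the largest index in ${\cal I}^b_R$. Call these two candidate vertices $v_\ell$ and $v_r$. The key point is that the identity of the sink is determined entirely by comparing $\Phi^s(v_\ell)$ against $\Phi^s(v_r)$. **Next I would** examine what happens to these two costs as $w(v_b)$ increases by a small amount that does \emph{not} change the membership of ${\cal I}^b_L$ or ${\cal I}^b_R$. Using the monotonicity facts stated just before the proposition: for $v_\ell \prec v_b$, the cost $\Phi^s(v_\ell)=\Phi^{s_M}_L(v_\ell)+\Phi^s_R(v_\ell)$ increases in $w(v_b)$ through its $R$-component, while for $v_r \succ v_b$, the cost increases through its $L$-component. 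The crucial quantitative remark from the text is that ``the farther away a vertex is from $v_b$, the more is the increase in the cost'': the incremental cost added to a vertex $v_i$ is proportional to $d(v_i,v_b)\tau$ times the weight increment. I would make this explicit and note that as long as the extremal indices are fixed, both candidate costs vary continuously and monotonically, so their \emph{ordering} can flip only when the extremal indices themselves update.

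**The main obstacle** I anticipate is ruling out the scenario in which the sink identity switches between $v_\ell$ and $v_r$ \emph{without} either extremal index changing—i.e., a crossing $\Phi^s(v_\ell)=\Phi^s(v_r)$ occurring strictly inside a run. I would argue that this does not cause a genuine sink \emph{movement} in the relevant sense: at such a crossing the two candidates are tied, and the algorithm tracks the leftmost sink, so the recorded sink changes value only when the crossing makes the other candidate strictly cheaper. But I would argue that a crossing cannot occur in the interior of a run without an index change, by the following reasoning. Whenever $v_\ell$ becomes strictly more expensive than $v_r$ (or vice versa), the vertex that was previously a valid extremal candidate would be superseded, which is exactly the event of an index change being detected in ${\cal I}^b_L$ or ${\cal I}^b_R$ under the invariant-maintenance procedure described in the text (vertices violating (a) or (b) are removed). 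Thus the update of the extremal index and the movement of the sink are the same event, which is precisely what the proposition asserts.

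**Finally I would** assemble these pieces: between consecutive extremal-index updates, the costs at $v_\ell$ and $v_r$ evolve monotonically but preserve their relative ordering, so by Proposition~\ref{prop:invariants}(c) the sink stays at whichever of $v_\ell,v_r$ is cheaper, i.e., at a fixed vertex; and the sink relocates exactly when the smallest index of ${\cal I}^b_L$ or the largest index of ${\cal I}^b_R$ changes. Since the text flags this as an ``obvious observation,'' I expect the formal proof to be short, with the only subtlety being the interior-crossing case addressed above.
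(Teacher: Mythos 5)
The paper offers no proof of this proposition---it is introduced as ``an obvious observation''---so there is nothing to compare your argument against; it has to stand on its own. Your framing is the natural one: by Proposition~\ref{prop:invariants}(c) the sink is always one of the two candidates $v_\ell$ (smallest index in ${\cal I}^b_L$) and $v_r$ (largest index in ${\cal I}^b_R$), and you correctly isolate the only nontrivial case, namely a crossing of $\Phi^s(v_\ell)$ and $\Phi^s(v_r)$ strictly inside a run with neither extremal index changing. But your resolution of that case does not work. You claim that when $v_\ell$ becomes strictly more expensive than $v_r$, ``the vertex that was previously a valid extremal candidate would be superseded,'' and that this is exactly an index-change event. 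It is not: invariants (a) and (b) of Proposition~\ref{prop:invariants} order costs only \emph{within} ${\cal I}^b_L$ and \emph{within} ${\cal I}^b_R$ respectively, and membership of $v_\ell$ in ${\cal I}^b_L$ depends only on comparisons with vertices in $[v_1,v_b]$ (it is a prefix minimum scanning leftward from $v_b$). The event $\Phi^s(v_\ell)>\Phi^s(v_r)$ violates neither invariant and removes nothing from either set; $v_\ell$ remains the smallest index of ${\cal I}^b_L$ and $v_r$ the largest index of ${\cal I}^b_R$. So the crossing you worry about is not the same event as an index update, and your argument does not exclude it.

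Nor is the case vacuous on its face. Both candidate costs are nondecreasing in $w(v_b)$, but they grow at different rates, roughly proportional to the distance from the candidate to (the front vertex of the cluster containing) $v_b$. If $v_\ell$ lies far to the left of $v_b$ while $v_r$ lies just to its right, $\Phi^s(v_\ell)$ grows faster than $\Phi^s(v_r)$ and could overtake it before $\Phi^s(v_\ell)$ reaches the cost of the next vertex of ${\cal I}^b_L$, i.e., before any heap event fires. Your concluding sentence---that between consecutive extremal-index updates the costs at $v_\ell$ and $v_r$ ``preserve their relative ordering''---is precisely the content of the proposition and is asserted rather than proved. To close the gap you would need either a genuine argument that a cross-side overtaking cannot precede an intra-side event, or to augment the event structure with one additional comparison between the two current extremal candidates (recomputed whenever either extremal index changes), which would also be the honest way to justify the tracking procedure built on this proposition.
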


We are thus interested in how ${\cal I}^b_L$ (resp. ${\cal I}^b_R$) change,
in particular when its smallest (resp. largest) index changes.
To find out, let $\delta$ be the smallest increase such that $(b,\delta) \in \Delta_R$
{\em and} increasing $w(v_b)$ by $\delta$ above $\underline{w}(v_b)$
causes the cost of vertex $v_i$ to reach the cost of $v_j$,
where $i$ and $j$ are either adjacent in ${\cal I}^b_L$ and $i<j$ holds,
or adjacent in ${\cal I}^b_R$ and $i>j$ holds.
If such a $\delta$ does not exist,
we set $\delta=\infty$.
Since we can find such a $\delta$ by binary search over $\Delta_R$,
finding it for each adjacent pair of indices in ${\cal I}^b_L$ and ${\cal I}^b_R$
takes $O(\log n)$ time,
and the total time for all adjacent pairs is $O(n\log n)$.
We insert $(\delta;i,j)$ into a min-heap ${\cal H}$,
organized according to the first component $\delta$,
from which we can extract the item with the smallest first component in constant time.
Note that $v_b$ is fixed.

Once ${\cal H}$ has been constructed as above,
we pick the item $(\delta;i,j)$ with the smallest $\delta$ from ${\cal H}$ (in constant time).
If $i,j\in {\cal I}^b_L$ (resp. $i,j\in {\cal I}^b_R$) then we remove $i$ (resp. $j$) from
${\cal I}^b_L$ (resp. ${\cal I}^b_R$),
and compute $(\delta';i^-,j)$ (resp. $(\delta';i,j^+)$) where $i^-$ (resp. $j^+$) is the
index in ${\cal I}^b_L$ (resp. ${\cal I}^b_R$) that is immediately before (resp. after) $i$ (resp. $j$).
We perform binary search to find $\delta'$, taking $O(\log n)$ time,
and insert $(\delta';i^-,j)$ (resp. $(\delta';i,j^+)$) into ${\cal H}$, again taking $O(\log n)$ time.
If $i$ was the smallest index in ${\cal I}^b_L$,
the sink may have moved.
In this case no new item is inserted into ${\cal H}$.
Similarly, if $j$ was the largest index in ${\cal I}^b_R$,
the sink may have moved,
and no new item is inserted into ${\cal H}$.

We repeat this until either ${\cal H}$ becomes empty or the min value in ${\cal H}$ is $\infty$.
It is repeated $O(n)$ times, and the total time required is $O(n\log n)$.
If the sink moves when the smallest index in ${\cal I}^b_L$ or the largest index in ${\cal I}^b_R$
changes,
we have determined the sink under all the scenarios with the lighter $w(v_b)$
since the last time the sink moved. 
Once $w(v_b)=\underline{w}(v_b)+\delta$ reaches $\overline{w}(v_b)$,
$b$ is incremented,
and the new boundary vertex $v_{b+1}$ now lies to the left of the old boundary vertex $v_b$ 
in Fig.~\ref{fig:sinkMovement}.

\subsection{Algorithm}
Algorithm~\ref{alg:alg-2}
is a formal description of our method to find a sink for each increment
of $w(v_b)$,
which are listed in $\Delta_R$.
It refers to ${\cal S}^*_b\triangleq \{s\in {\cal S}^* \mid b(s)=b\}$.
\renewcommand\footnoterule{} 
\begin{floatTogether}
\begin{algorithm}[H]\label{alg:alg-2}
\KwData {
\begin{compactitem}[--]
\item
Boundary vertex $v_b$
\item
Sorted array $\Delta_R$; 
\item 
Index sets ${\cal I}^b_L$ and ${\cal I}^b_R$ for $w(v_b)=\underline{w}(v_b)$
\item
Arrays $\{\underline{W}[\cdot], \overline{W}[\cdot]\}$ and $\{\overrightarrow{S}^{s_M}[v_j], \overleftarrow{S}^{s_0}[v_j]\mid j=1, \ldots, n\}$
\end{compactitem}
}
\KwResult {
\begin{compactitem}[--]
\item 
Sinks $\{\mu^s \mid s\in {\cal S}^*_b\cap {\cal S}^*_L\}$
\end{compactitem}
}
\BlankLine
\For {each adjacent pair $i,j\in{\cal I}^b_L~(i<j)$}{
Using binary search, find the smallest increment $\delta$ in $\Delta_R$ such that
 $\Phi^{s(\delta)}(v_i) \ge \Phi^{s(\delta)}(v_j)$, and insert $(i,j;\delta)$ into a min-heap ${\cal H}_L$\;
 }  % end For
 \For {each adjacent pair $i,j\in{\cal I}^b_R~(i<j)$}{
 Using binary search, find the smallest increment $\delta$ in $\Delta_R$ such that
 $\Phi^{s(\delta)}(v_i) \le \Phi^{s(\delta)}(v_j)$, and insert $(i,j; \delta)$ into a min-heap ${\cal H}_R$\;
 }  % end For
\While{${\cal H}_L \cup{\cal H}_R\not=\emptyset$}{
   From ${\cal H}_L \cup{\cal H}_R$ remove item $(i,j;\delta)$ with smallest $\delta$,
   and name it $\hat{\delta}_{i,j}$ \;
  \If {$i$ is not the first index in ${\cal I}^b_L$ and $j$ is not the last index in ${\cal I}^b_R$}{
 	\If {$(i,j;\delta)\in {\cal H}_L$}
	 {Remove $i^-$ (the immediate predecessor of $i$) from ${\cal I}^b_L$, 
  	 compute $\hat{\delta}_{i^-,j}$, and insert $(i^-,j; \hat{\delta}_{i^-,j})$ into ${\cal H}_L$
    	} %end If
	\Else {
    	Remove $j^+$ (the immediate successor of $j'$) from ${\cal I}^b_R$,
   	compute ${\hat{\delta}}_{i,j^+}$, and insert $(i,j^+; \hat{\delta}_{i,j^+})$ into ${\cal H}_R$
       	} %end Else
	Skip the {\bf else} part
  } %end if   
\Else {  	
 	\If {$i$ is the first index in ${\cal I}^b_L$}
	 {Remove $i$ from ${\cal I}^b_L$
    	} %end If
	\If {$j$ is the last index in ${\cal I}^b_R$}
    	{Remove $j$ from ${\cal I}^b_R$
       	} %end If
  	 If the index corresponding to the current sink has been removed
	 then determine the new sink either at the vertex indexed by first index in ${\cal I}^b_L$ or
	 or the last index in ${\cal I}^b_R$, whichever has the smaller cost \;
	 From now on the sink remains at this new position until it moves the next time
  } %end Else
} %end While
\caption{{\sc Computing} $\{\mu^s \mid s\in {\cal S}^*_b\cap {\cal S}^*_L\}$}
\end{algorithm}
\end{floatTogether}

\begin{lemma}\label{lem:delta}
\begin{enumerate}
\item[(a)]
The minimum increment in $\Delta_R$ that causes the cost of $v_i$ to exceed that of 
the next vertex closer to $v_b$,
can be determined in $O(\log n)$ time.
\item[(b)]
Algorithm~\ref{alg:alg-2} runs in $O(n\log n)$ time for a given $v_b$.
\end{enumerate}
\end{lemma}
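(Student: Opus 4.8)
The plan is to treat the two parts in sequence, deriving (b) from (a) together with a potential argument on the index sets. For part (a) I would first reduce to the L-case, in which $v_i,v_j\in{\cal I}^b_L$ with $i<j<b$ and $v_j$ is the successor of $v_i$ in ${\cal I}^b_L$, i.e. the next vertex closer to $v_b$; the R-case is symmetric. Writing $g(\delta)\triangleq\Phi^{s(\delta)}(v_i)-\Phi^{s(\delta)}(v_j)$ for the scenario $s(\delta)$ obtained by raising $w(v_b)$ by the increment $\delta$, the goal is to show that $g$ is monotone non-decreasing along the sorted increments of $\Delta_R$, so that the first $\delta$ with $g(\delta)\ge 0$ is exactly the threshold we seek. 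Since $v_i,v_j\prec v_b$, the left costs are frozen at $\Phi^s_L(v_i)=\Phi^{s_M}_L(v_i)$ and $\Phi^s_L(v_j)=\Phi^{s_M}_L(v_j)$, so $g(\delta)$ differs from $\Phi^s_R(v_i)-\Phi^s_R(v_j)$ only by a constant, and the monotonicity of this right-cost difference is precisely the observation recorded just before the lemma (the vertical arrows of Fig.~\ref{fig:sinkMovement}): the extra evacuees injected at $v_b$ must travel the additional distance $d(v_i,v_j)$ to reach the farther vertex $v_i$, so $\Phi^s_R(v_i)$ grows at least as fast as $\Phi^s_R(v_j)$.

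Granting monotonicity, part (a) follows at once. Because $\Delta_R$ is already sorted by $\lessdot$, I would binary-search it for the smallest increment with $g(\delta)\ge 0$, using $O(\log|\Delta_R|)=O(\log n)$ probes since $|\Delta_R|=O(n^2)$ by Lemma~\ref{lem:clusters2}(c). Each probe evaluates only the two quantities $\Phi^{s(\delta)}(v_i)$ and $\Phi^{s(\delta)}(v_j)$, which Lemma~\ref{lem:costAtv_i}(a) returns in constant time once the data of Corollary~\ref{cor:s0sMclusters} are available. Hence the threshold increment is found in $O(\log n)$ time.

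For part (b) I would bound the three phases of Algorithm~\ref{alg:alg-2} separately. The two initialization loops touch each adjacent pair of ${\cal I}^b_L$ and of ${\cal I}^b_R$ once; there are $O(n)$ such pairs, and each costs one application of part (a) plus one heap insertion, i.e. $O(\log n)$, for $O(n\log n)$ in total. For the main while-loop the crucial point is a monotone-potential argument: each iteration extracts one item from ${\cal H}_L\cup{\cal H}_R$ and \emph{removes} one index from ${\cal I}^b_L\cup{\cal I}^b_R$, and no index is ever reinserted, so the quantity $|{\cal I}^b_L|+|{\cal I}^b_R|$ strictly decreases. Starting from $O(n)$, the loop therefore runs $O(n)$ times, and each iteration performs one extract-min, at most one binary search via part (a), and at most one insertion, all in $O(\log n)$. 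Summing the three phases yields the claimed $O(n\log n)$ bound for a fixed $v_b$.

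The step I expect to be the genuine obstacle is the monotonicity claim underlying part (a). The extra (travel) cost contributes the clean term $d(v_i,v_j)\tau$ per unit increase of $w(v_b)$, but the quadratic intra costs $\lambda(C)^2/2c$ complicate the picture, because the cluster carrying $v_b$ — and hence the weight whose square is differentiated — may be organized differently in ${\cal C}^s_{R,i+1}$ than in ${\cal C}^s_{R,j+1}$. The delicate part is to argue that these intra-cost contributions either cancel, when $v_b$ lies in the same cluster for both reference points, or only reinforce the inequality, so that $\Phi^s_R(v_i)-\Phi^s_R(v_j)$ never decreases as $w(v_b)$ rises. This is exactly the content of the pre-lemma discussion and of Proposition~\ref{prop:invariants}, which I would invoke rather than re-derive from the cluster dynamics.
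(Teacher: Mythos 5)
Your proof follows essentially the same route as the paper's: part (a) is binary search over the sorted $\Delta_R$ with each probe evaluated in constant time via Lemma~\ref{lem:costAtv_i}, and part (b) charges $O(\log n)$ per heap operation over $O(n)$ initial pairs and $O(n)$ deletions/insertions in the while-loop. Your explicit identification of the monotonicity of $\Phi^{s(\delta)}(v_i)-\Phi^{s(\delta)}(v_j)$ in $\delta$ as the prerequisite for binary search is a point the paper leaves implicit (it relies on the pre-lemma observation that cost increases grow with distance from $v_b$), but this is a refinement of, not a departure from, the paper's argument.
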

\begin{proof}
(a) Use binary search on $\Delta_R$,
and compare the costs for each probe in constant time.

(b)  Note that $|{\cal S}^*_b|=O(n)$. 
Evaluating $\Phi^{s(\delta)}(v_i)$ and $\Phi^{s(\delta)}(v_j)$ in Lines~2 and 5 takes constant time
by Lemma~\ref{lem:costAtv_i}.
Thus the two for-loops take $O(n\log n)$ time

Updating ${\cal H}_L$ and ${\cal H}_R$ takes $O(\log n)$ time per insertion/deletion,
which will occur at most $n$ times and costs $O(n\log n)$ time.
All other steps take constant time.
Step~8 takes $O(n)$ time.
\end{proof}

For the ${\cal R}^s$-clusters w.r.t. $e_{i-1}$ that lie to the right of $C^s_{R,i}(v_b)$
and are not merged as a result of increase in $w(v_b)$,
the sum of their intra costs was already precomputed.
We can similarly compute $\{\mu^s \mid s\in {\cal S}^*_b\cap {\cal S}^*_R\}$ in $O(n\log n)$ time.
Running Algorithm~\ref{alg:alg-2} and its counterpart for ${\cal S}^*_R$ for $b=1,2,\ldots,n$,
we get
\begin{lemma}\label{lem:allSinks}
The sinks $\{\mu^s \mid s\in {\cal S}^*\}$ can be computed in $O(n^2\log n)$ time.
\end{lemma}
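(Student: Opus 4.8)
The plan is to reduce the computation of all $O(n^2)$ sinks to $n$ essentially independent invocations of Algorithm~\ref{alg:alg-2} (together with its symmetric counterpart for the ${\cal S}^*_R$ side), one per choice of boundary vertex, and then to invoke the per-invocation bound already established in Lemma~\ref{lem:delta}(b). Concretely, I would write ${\cal S}^* = {\cal S}^*_L \cup {\cal S}^*_R$ and use the boundary-vertex partition ${\cal S}^*_b = \{s \in {\cal S}^* \mid b(s) = b\}$ to express ${\cal S}^*_L = \bigcup_{b=1}^{n} ({\cal S}^*_b \cap {\cal S}^*_L)$, and symmetrically for ${\cal S}^*_R$. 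Each block ${\cal S}^*_b \cap {\cal S}^*_L$ is precisely the set of scenarios swept out as $w(v_b)$ increases from $\underline{w}(v_b)$ to $\overline{w}(v_b)$, so a single run of Algorithm~\ref{alg:alg-2} for the fixed boundary vertex $v_b$ reports the sink for every member of that block.

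The steps, in order, are as follows. First I would perform the global preprocessing once: by Corollary~\ref{cor:s0sMclusters} compute the $O(n)$ distinct clusters, the extra and intra costs, and the prefix-sum arrays $\overrightarrow{S}^{s_M}[\cdot]$ and $\overleftarrow{S}^{s_0}[\cdot]$ in $O(n^2)$ time, and by Lemma~\ref{lem:clusters2}(c) build the sorted family $\Delta_R$ (and its mirror for ${\cal L}^s$-clusters) in $O(n^2)$ time. Second I would loop over $b = 1, \ldots, n$. For each $b$, I use Lemma~\ref{lem:costAtv_i} to evaluate $\{\Phi^{\overline{s}_{b-1}}(v_i) \mid i = 1, \ldots, n\}$ in $O(n)$ time, then scan leftward and rightward from $v_b$ to extract the running-minimum index sets ${\cal I}^b_L$ and ${\cal I}^b_R$ in $O(n)$ time; these are exactly the inputs required by Algorithm~\ref{alg:alg-2}. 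I then run Algorithm~\ref{alg:alg-2} on this data, which by Lemma~\ref{lem:delta}(b) takes $O(n\log n)$ time and outputs every sink in ${\cal S}^*_b \cap {\cal S}^*_L$. Correctness of the reported sinks follows from Propositions~\ref{prop:invariants} and \ref{prop:sameSink}: the sink is always pinned to the smallest index of ${\cal I}^b_L$ or the largest index of ${\cal I}^b_R$, and it can change only at the merging events detected from the heap, so a single sink value correctly serves the whole sub-range of scenarios between two consecutive events. Running the symmetric procedure yields ${\cal S}^*_b \cap {\cal S}^*_R$ in $O(n\log n)$ as well.

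Summing the two per-$b$ costs over the $n$ boundary vertices gives $\sum_{b=1}^{n} O(n\log n) = O(n^2\log n)$, while the one-time preprocessing and the $O(n)$-per-$b$ index-set setup contribute only $O(n^2)$, which is absorbed. Hence the total is $O(n^2\log n)$, as claimed.

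The main obstacle I anticipate is not the summation but the amortized bookkeeping inside each call, which is exactly what Lemma~\ref{lem:delta}(b) packages: the heap ${\cal H}_L \cup {\cal H}_R$ starts with $O(n)$ items, every while-iteration permanently deletes an index from ${\cal I}^b_L$ or ${\cal I}^b_R$, so only $O(n)$ iterations occur, and each binary search over the sorted $\Delta_R$ together with the heap update costs $O(\log n)$. The subtle points I would verify explicitly are, first, that rebuilding ${\cal I}^b_L$ and ${\cal I}^b_R$ from scratch for each $b$ genuinely stays within the $O(n)$-per-$b$ budget, so that the setup does not itself inflate to $O(n^2\log n)$ through repeated searches; and second, that the boundary-vertex blocks tile ${\cal S}^*$ exactly, so that no scenario in ${\cal S}^*$ is reported twice or omitted as $b$ advances and the new boundary vertex moves one step to the left in the picture of Fig.~\ref{fig:sinkMovement}.
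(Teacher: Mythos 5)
Your proposal is correct and follows essentially the same route as the paper: the paper likewise obtains Lemma~\ref{lem:allSinks} by running Algorithm~\ref{alg:alg-2} and its ${\cal S}^*_R$ counterpart for each boundary vertex $b=1,\ldots,n$, with the $O(n)$-time per-$b$ construction of ${\cal I}^b_L$ and ${\cal I}^b_R$ via Lemma~\ref{lem:costAtv_i} and the $O(n\log n)$ per-call bound of Lemma~\ref{lem:delta}(b) summing to $O(n^2\log n)$. The preprocessing and tiling details you flag are exactly those the paper relies on (Corollary~\ref{cor:s0sMclusters}, Lemma~\ref{lem:clusters2}, and the partition ${\cal S}^*_b$), so no further argument is needed.
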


%%%%
\section{Minmax regret sink}\label{sec:regret}
Since we know the sinks $\{\mu^s\mid s\in{\cal S}^*\}$
(Lemma~\ref{lem:allSinks}),
we proceed to compute the upper envelope for the $O(n^2)$ regret functions
$\{R^s(x) = \Phi^s(x) - \Phi^s(\mu^s)\mid s \in {\cal S}^*\}$.
The minmax regret sink $\mu^*$ is at the lowest point of this upper envelope.

%%%%%
\subsection{Upper envelope for $\{R^s(x) \mid s\in {\cal S}^*\}$}
If we try to find the upper envelope $\max_{s\in {\cal S}^*}\Phi^s(x)$ in one shot,
it would take at least $O(n^3)$ time,
since $|{\cal S}^*|= O(n^2)$,
and for each $s$, $\Phi^s(x)$ consists of $O(n)$ linear segments.
Recall the definition ${\cal S}^*_b= \{s\in {\cal S}^*\mid b(s)=b\}$
for $b=1,2,\ldots, n$.
We employ the following two-phase processing,
and carried out each phase in $O(n^2\log n)$ time.
\begin{itemize}
\item
[{\em Phase 1}:]
 For each $b$, compute the upper envelope $\max_{s\in{\cal S}^*_b}R^s(x)$.
\item
[{\em Phase 2}:] Compute the upper envelope for the results from {\em Phase 1}.
\end{itemize}
In {\em Phase 1}, we successively merge regret functions,
spending amortized $O(\log n)$ time per regret function.
Thus the total time for a given $b$ is $O(n\log n)$ and the total time for all 
$k$ is $O(n^2\log n)$.
In {\em Phase 2}, we then compute the upper envelope for the resulting $O(n)$ regret functions
with a total of $O(n^2)$ linear segments.
To implement {\em Phase 1},
we first prove the following lemma in the Appendix.
\begin{lemma}\label{lem:increasing}
Let $s, s' \in {\cal S}^*_b$ be two scenarios such that and $s\lessdot s'$.
As $x$ moves to the right,
the difference $D(x)=\Phi^{s'}(x) -\Phi^{s}(x)$ decreases monotonically for $v_1 \preceq x\preceq v_b$
and increases monotonically for $v_b \preceq x \preceq v_n$.
\end{lemma}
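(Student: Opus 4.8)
The plan is to reduce the statement to the only nontrivial difference between the two scenarios and then argue entirely through the slope of $\Phi^s(x)$. Since $s,s'\in{\cal S}^*_b$ share the boundary index $b$ and $s\lessdot s'$, they agree on the weight of every vertex except $v_b$, where $w^{s'}(v_b)=w^s(v_b)+\Delta$ for some $\Delta>0$. I would therefore set $D(x)=\Phi^{s'}(x)-\Phi^s(x)$ and establish that its slope is $-\tau\Delta$ throughout $v_1\prec x\prec v_b$ and $+\tau\Delta$ throughout $v_b\prec x\prec v_n$; monotonicity on the two closed intervals then follows once the behaviour at the vertices is checked.

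First I would compute the slope of $\Phi^s(x)$ on a generic open interval $(v_k,v_{k+1})$. By (\ref{eqn:left1}) and (\ref{eqn:right1}) the intra-cost terms $\sum_C\lambda(C)^2/2c$ depend only on the cluster sequences ${\cal C}^s_{L,k}$ and ${\cal C}^s_{R,k+1}$, which are fixed on the interval, so they are constant in $x$; only the extra costs vary. As $x$ moves right, every distance in $\Phi^s_{L,k}$ grows and every distance in $\Phi^s_{R,k+1}$ shrinks at unit rate, so the slope is $\tau\big(\sum_{C\in{\cal C}^s_{L,k}}\lambda(C)-\sum_{C\in{\cal C}^s_{R,k+1}}\lambda(C)\big)=\tau\big(2W^s[v_k]-W^s[v_n]\big)$, independently of how the vertices are grouped into clusters. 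Subtracting the same expression for $s'$, and using that $W^{s'}[v_k]-W^s[v_k]$ equals $\Delta$ when $k\ge b$ and $0$ otherwise while $W^{s'}[v_n]-W^s[v_n]=\Delta$ always, the slope of $D$ on $(v_k,v_{k+1})$ equals $-\tau\Delta$ when $k<b$ and $+\tau\Delta$ when $k\ge b$. This yields the desired sign on every open interval, with the turning point exactly at $v_b$.

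The hard part will be handling the downward spikes that $\Phi^s$ exhibits at the vertices, so that this piecewise statement upgrades to genuine monotonicity on the closed intervals $[v_1,v_b]$ and $[v_b,v_n]$. I would show that for every vertex $v_k$ with $k\neq b$ the spike has the same depth under $s$ and $s'$, hence cancels in $D$: for $k<b$ the spike is determined by the ${\cal L}$-clusters carrying only $v_1,\dots,v_k$, and for $k>b$ by the ${\cal R}$-clusters carrying only $v_k,\dots,v_n$, neither of which involves $v_b$, so the relevant cluster structure is independent of $w(v_b)$. Thus $D$ is continuous at every $v_k$ with $k\neq b$ and coincides there with the continuous piecewise-linear function whose slopes were just computed, which is strictly decreasing before $v_b$ and strictly increasing after. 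At $v_b$ itself the two scenarios do differ, but the spike depth is the cost of $v_b$'s own evacuees and is nondecreasing in $w(v_b)$, so $D$ can only dip further down at $v_b$, which is the common endpoint and the valley bottom; this preserves monotonicity on both closed intervals and completes the argument. I expect the step needing the most care to be verifying that the extra-cost slope and the vertex spike depths are genuinely unaffected by the cluster mergers triggered as $w(v_b)$ grows from $w^s(v_b)$ to $w^{s'}(v_b)$.
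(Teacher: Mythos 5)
Your treatment of the open intervals is correct and is in fact cleaner than the paper's: on $(v_k,v_{k+1})$ only the extra costs vary, the slope of $\Phi^s$ is $\tau\bigl(2W^s[v_k]-W^s[v_n]\bigr)$ regardless of the clustering, and subtracting gives slope $-\tau\Delta$ of $D$ for $k<b$ and $+\tau\Delta$ for $k\ge b$. The gap is where you yourself predicted it would be: the vertex discontinuities. The jump of $\Phi^s$ at a vertex $v_k$ with $k>b$ has \emph{two} one-sided components. Approaching from the left, the discontinuity is $\Phi^s_{R,k}(v_k^-)-\Phi^s_{R,k+1}(v_k)$, which indeed involves only the ${\cal R}$-clusters carrying $v_k,\dots,v_n$ and therefore cancels between $s$ and $s'$ as you claim. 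But leaving $v_k$ to the right, the discontinuity is $\Phi^s_{L,k}(v_k^+)-\Phi^s_{L,k-1}(v_k)$, and the ${\cal L}$-clusters involved carry $v_1,\dots,v_k\ni v_b$. When $v_k$ becomes the new front vertex of ${\cal C}^s_{L,k}$ it absorbs some prefix of the clusters of ${\cal C}^s_{L,k-1}$, and because the heavier weight $w^{s'}(v_b)$ lengthens the cluster containing $v_b$, the set of clusters absorbed can differ under $s$ and $s'$; since the intra cost is quadratic in the cluster weight, these two jumps do \emph{not} cancel. The symmetric statement holds for $k<b$ with the roles of the ${\cal L}$- and ${\cal R}$-sides exchanged.

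What monotonicity actually requires at such a $v_k$ is the one-sided inequality $J_R^{s'}(v_k)\ge J_R^{s}(v_k)$ (the right-hand jump is at least as large under the heavier scenario), not equality, and establishing this inequality is essentially the entire content of the paper's proof: it compares $\Phi^{\sigma}(x')-\Phi^{\sigma}(x)$ for $x\prec v_k\prec x'$ under $\sigma=s,s'$ in the worst case where a multi-cluster merger occurs under $s$ but only $v_k$ joins an existing cluster under $s'$, and uses the non-merging condition (\ref{eqn:eqn1}) together with the explicit quadratic intra-cost terms to conclude the difference is positive. Your proposal replaces this step with a cancellation that is false in general, and your fallback remark about $v_b$ (``the spike depth is the cost of $v_b$'s own evacuees'') understates the issue for the same reason: the spike depth includes the intra-cost restructuring of whole clusters, not just $v_b$'s own contribution. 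To repair the proof you would need to supply the merger-by-merger comparison of the quadratic intra costs on the $v_b$-carrying side of each vertex, which is precisely the argument given in the Appendix.
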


We divide each regret function in $\{R^s(x) \mid s\in {\cal S}^*_b\}$ into two parts:
left of $v_b$ and right of $v_b$.
We then find the upper envelope for the left set and right set separately.
Note that each $R^s(x)$ has $O(n)$ bending points, since they bend only at vertices.
Taking the max of two such functions may add one extra bending point on an edge,
so the total bending points in the upper bound is still  $O(n)$.

By definition we have
\begin{eqnarray}\label{eqn:diff}
R^{s'}(x) - R^s(x) &=& \Phi^{s'}(x) -\Phi^{s'}(\mu^{s'}) - \{\Phi^s(x)  -\Phi^s(\mu^s)\}\nonumber\\
				&=& \Phi^{s'}(x) -\Phi^s(x)  - \{\Phi^{s'}(\mu^{s'}) -\Phi^s(\mu^s)\}.
\end{eqnarray}
Note that the second term in (\ref{eqn:diff}) is independent of position $x$.
Lemma~\ref{lem:increasing} implies
\begin{lemma}\label{lem:crossing2}
Let $s, s' \in {\cal S}^*_b$ be two scenarios such that and $s\lessdot s'$.
Then $R^{s'}(x)$ may cross $R^{s}(x)$ at most once in the interval $[v_1,v_b]$ from above,
and at most once in the interval $[v_b,v_n]$ from below.
\end{lemma}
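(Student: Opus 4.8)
The plan is to reduce the claim directly to Lemma~\ref{lem:increasing}, exploiting the fact that the only order-dependent quantity in (\ref{eqn:diff}) is a constant in $x$. Writing $K \triangleq \Phi^{s'}(\mu^{s'}) - \Phi^{s}(\mu^{s})$ and $D(x) \triangleq \Phi^{s'}(x) - \Phi^{s}(x)$, equation (\ref{eqn:diff}) reads
\[
R^{s'}(x) - R^{s}(x) = D(x) - K,
\]
where $K$ does not depend on $x$. Consequently, the points at which $R^{s'}$ and $R^{s}$ cross are precisely the zeros of $g(x) \triangleq D(x) - K$, and a crossing \emph{from above} (resp.\ \emph{from below}) corresponds to $g$ changing sign from positive to negative (resp.\ from negative to positive) as $x$ increases.

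First I would invoke Lemma~\ref{lem:increasing} to transfer monotonicity from $D$ to $g$: subtracting the constant $K$ does not alter monotonic behavior, so $g$ is monotonically decreasing on $[v_1,v_b]$ and monotonically increasing on $[v_b,v_n]$. A monotonic function changes sign at most once on an interval, and the direction of that sign change is forced by the direction of monotonicity. Hence on $[v_1,v_b]$, where $g$ decreases, $g$ can only pass from positive to negative, yielding at most one crossing from above; and on $[v_b,v_n]$, where $g$ increases, $g$ can only pass from negative to positive, yielding at most one crossing from below. These two observations are exactly the two assertions of the lemma.

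There is essentially no obstacle remaining once Lemma~\ref{lem:increasing} is in hand; the only point deserving care is the meaning of a crossing when $g$ is merely weakly monotonic. I would note that even if $g$ is flat on a subinterval, a weakly monotonic function still changes sign at most once, so the two regret functions swap their relative order at most once per interval---which is all that the subsequent upper-envelope merging in \emph{Phase 1} requires. The genuine difficulty has thus been deferred to Lemma~\ref{lem:increasing}, whose appendix proof must track how the extra and intra costs of the clusters lying on each side of $v_b$ respond as $w(v_b)$ is increased; the crossing lemma itself is then an immediate corollary.
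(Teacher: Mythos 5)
Your proof is correct and follows exactly the paper's route: the paper likewise derives the claim from equation~(\ref{eqn:diff}) by observing that $\Phi^{s'}(\mu^{s'})-\Phi^{s}(\mu^{s})$ is a constant in $x$ and then applying the monotonicity of $D(x)$ from Lemma~\ref{lem:increasing} to bound the sign changes on each side of $v_b$. Your added remark about weak monotonicity is a reasonable clarification but does not change the argument.
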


See Fig.~\ref{fig:intersectPath2} for an illustration or Lemma~\ref{lem:crossing2}.
\begin{figure}[ht]
\centering
\includegraphics[height=22mm]{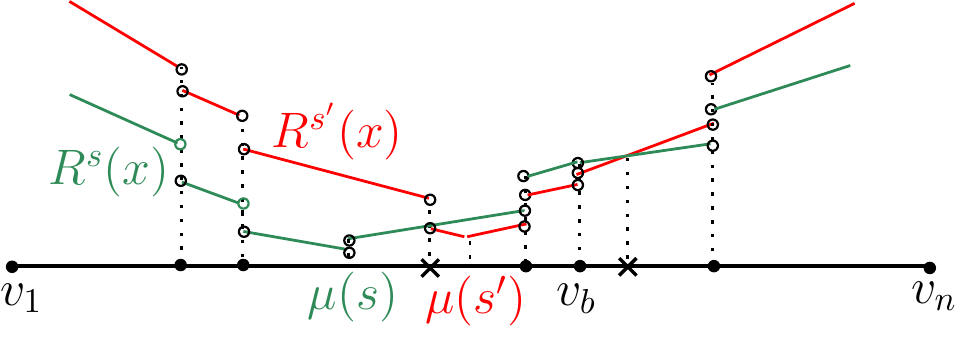}
\caption{$R^s(x)$ and $R^{s'}(x)$ cross each other at {\tt x}'s.}
\label{fig:intersectPath2}
\end{figure}
Algorithm~\ref{alg:alg-3} computes $\max_{s\in {\cal S}^*_b}R^s(x)$.
\begin{lemma}\label{lem:upperEnv}
\begin{enumerate}
\item[(a)]
The upper envelope $\max_{s\in {\cal S}^*_b}R^s(x)$
has $O(|{\cal S}^*_b|+ n)$ line segments.
\item[(b)]
 Algorithm~\ref{alg:alg-3} computes it correctly in $O(|{\cal S}^*_b|\log n)$ time.
\end{enumerate}
\end{lemma}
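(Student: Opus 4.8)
\noindent
The plan is to use the single-crossing property of Lemma~\ref{lem:crossing2} to treat $\{R^s \mid s\in {\cal S}^*_b\}$, restricted to each of the intervals $[v_1,v_b]$ and $[v_b,v_n]$, as an arrangement of \emph{pseudolines}, so that both the size bound (a) and the running time (b) fall out of a Davenport--Schinzel argument of order $1$. Since the two partial envelopes are simply concatenated at $v_b$, it suffices to argue on one side, say $[v_b,v_n]$.

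For part~(a), note first that because $\lessdot$ is a total order on ${\cal S}^*_b$, Lemma~\ref{lem:crossing2} applies to \emph{every} pair, so on $[v_b,v_n]$ any two of the functions cross at most once. Viewed as $x$-monotone curves they thus form a pseudoline arrangement, and their upper envelope is a Davenport--Schinzel sequence of order $1$: once a function drops off the envelope it cannot return, as re-entry would force a second crossing with the function that displaced it. Hence each $R^s$ contributes at most one maximal arc, giving $O(|{\cal S}^*_b|)$ arcs and $O(|{\cal S}^*_b|)$ crossing breakpoints. The only remaining breakpoints are bends, which by construction occur only at the $n$ vertices; since the envelope is single-valued it bends at most once per vertex, contributing $O(n)$. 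Adding the two sides gives $O(|{\cal S}^*_b|+n)$ segments.

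For part~(b), I would process the scenarios in increasing $\lessdot$ order. By Lemma~\ref{lem:increasing}, on $[v_b,v_n]$ the difference $R^{s'}(x)-R^s(x)$ increases monotonically whenever $s\lessdot s'$, so a later function plays the role of a line of larger slope and the upper envelope, read from $v_b$ to $v_n$, lists its arcs in increasing $\lessdot$ order. I would maintain the envelope as a stack of arcs ordered from $v_b$ to $v_n$ and insert in the manner of a monotone-chain (Graham) scan. To insert $R^{s'}$, compare it at $v_n$ with the current top arc, say from $R^t$ with $t\lessdot s'$: if $R^{s'}(v_n)\le R^t(v_n)$ then, since $R^{s'}-R^t$ is increasing and nonpositive at the right end, $R^{s'}\le R^t$ throughout $[v_b,v_n]$ and $R^{s'}$ is discarded; otherwise $R^{s'}$ becomes the new rightmost arc. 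Each candidate arc to be popped is fully dominated iff $R^{s'}$ already exceeds it at the left endpoint of that arc's interval (the minimum of the increasing difference there), an $O(1)$ test by Lemma~\ref{lem:costAtv_i}; the single crossing that terminates the insertion is located by one binary search over the $O(n)$ breakpoints, valid precisely because the difference is monotone. By the order-$1$ property each function is pushed and popped at most once, so the total number of stack operations is $O(|{\cal S}^*_b|)$, and each insertion performs one $O(\log n)$ binary search. This yields amortized $O(\log n)$ per scenario and $O(|{\cal S}^*_b|\log n)$ overall. The interval $[v_1,v_b]$ is handled symmetrically, with the two ends exchanged, and concatenation at $v_b$ gives $\max_{s\in{\cal S}^*_b}R^s(x)$.

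The main obstacle is keeping part~(b) at amortized $O(\log n)$ per scenario: each $R^s$ already carries $\Omega(n)$ segments, so any method that rescans those segments on every insertion costs $\Theta(n)$ per function and $\Theta(n^2)$ for a single $b$, which would inflate the global bound back to $O(n^3)$. The two facts that defeat this are the order-$1$ structure, which lets the per-arc work be charged globally (each arc is removed once) rather than per insertion, and the monotonicity of $R^{s'}-R^s$ from Lemma~\ref{lem:increasing}, which replaces a linear scan for each crossing by a single $O(\log n)$ binary search and makes each pop test an $O(1)$ endpoint comparison. The delicate point to verify is that both the pop test and the crossing search can indeed be answered in $O(\log n)$ using only the precomputed data of Corollary~\ref{cor:s0sMclusters} and the constant-time evaluations of Lemma~\ref{lem:costAtv_i}, so that no insertion secretly pays for the $\Omega(n)$ internal bends of a function.
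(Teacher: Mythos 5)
Your proposal is correct and follows essentially the same route as the paper: part (a) from the single-crossing property of Lemma~\ref{lem:crossing2} plus the $O(n)$ vertex bends, and part (b) by inserting the $R^s$ in $\lessdot$ order and exploiting the monotonicity of $R^{s'}-R^s$ (Lemma~\ref{lem:increasing}) together with the constant-time evaluations of Lemma~\ref{lem:costAtv_i} to locate each new crossing with a single $O(\log n)$ binary search. The only difference is presentational --- you maintain an explicit stack with $O(1)$ pop tests and spell out the order-$1$ Davenport--Schinzel accounting, whereas Algorithm~\ref{alg:alg-3} tracks a single crossing point $X$ and binary-searches on $U$ directly --- and your version is, if anything, the more carefully justified of the two.
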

\begin{proof}
(a) Without loss of generality,
let us consider the upper envelope in the interval $[v_b,v_n]$.
Since $R^s(x) =\Phi^s(x)  -\Phi^s(\mu^s)$,
$R^s(x)$ is linear over the edge connecting any adjacent pair of vertices,
and $\max_{s\in {\cal S}^*_b}\Phi^s(x)$ has $O(|{\cal S}^*_b|+n)$ line segments on
all edges by Lemma~\ref{lem:crossing2}.

(b)  By Lemma~\ref{lem:increasing},
the condition of Line 5 can be tested by their values at $v_b$,
and the condition of Line 8 can be tested by their values at $v_n$.
If $R^{s}(x)$ and $R^{s'}(x)$ in Lemma~\ref{lem:crossing2} intersect at point $X$ to the right of $v_b$,
then we have $R^{s'}(x)\ge R^{s}(x)$ holds for $x\succ X$,
and we can ignore $R^{s}(x)$ for $x\succ X$.
After the {\bf if}$-${\bf else},
Algorithm~\ref{alg:alg-3} ignores the regret function that was processed,
which is also justified by Lemma~\ref{lem:increasing}. 
\end{proof}

\begin{algorithm}[ht]\label{alg:alg-3}
\KwData {
\begin{compactitem}[--]
\item
$\{\Phi^s(x), \mu^s \mid s\in {\cal S}^*_b\}$
\end{compactitem}
}
\KwResult {
\begin{compactitem}[--]
\item 
$\max_{s\in {\cal S}^*_b} R^s(x)$
\end{compactitem}
}
\BlankLine
Order $\{R^s(x)\mid s\in {\cal S}^*_b\}$ by $\lessdot$ from the ``smallest''
to the ``largest'' \;
Initialize the upper envelope $U(x)$ to the ``smallest'' function,
and the {\em crossing point} $X$ to $v_n$ \; 
\While{there is an unprocessed regret function}{
	Pick the next ``smallest'' unprocessed regret function, $R^s(x)$ \;
	\If {$\forall x: R^s(x)\ge U(x)$ (compare their values at $v_b$)}{ 
		set $U(x)=R^s(x)$ and $X=v_n$  and go to Line 17
	 } %end if
	\If {$\forall x: R^s(x)\le U(x)$ (compare their values at $v_n$)}{ 
		Do nothing and go to Line 17
	 } %end if		
	\If {$U(X)\ge R^s(X)$}{ 
  	   Compute the intersection of $U(x)$ and $R^s(X)$ (to the right of $X$) by binary search,
	   and update $U(x)$ and $X$
	 } %end if
	\Else {Compute  the intersection of $U(x)$ and $R^s(X)$ (to the left of $X$) by binary search,
	and update $U(x)$ and $X$
	} %end Else
	Mark $R^s(x)$ as {\em ``processed.''}
} %end While
\caption{{\sc Computing} $\max_{s\in {\cal S}^*_b} R^s(x)$}
\end{algorithm}

%%%%
\subsection{Main theorem}
Since $O(\sum_{b=1}^n |{\cal S}^*_b|\log n)= O(n^2\log n)$,
Lemma~\ref{lem:upperEnv} implies
\begin{lemma}\label{lem:bendingPoints} 
The upper envelope $\max_{s\in {\cal S}^*}R^s(x)$ has $O(n^2)$ linear segments,
and can be computed in  $O(n^2\log n)$ time.
\end{lemma}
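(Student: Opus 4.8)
The plan is to realize the two-phase scheme sketched at the start of Section~\ref{sec:regret}, treating \emph{Phase 1} as an essentially immediate consequence of Lemma~\ref{lem:upperEnv} and putting the real effort into the accounting and into \emph{Phase 2}. First I would recall that all the sinks $\{\mu^s\mid s\in{\cal S}^*\}$, and hence all the regret functions $R^s(x)=\Phi^s(x)-\Phi^s(\mu^s)$, are available by Lemma~\ref{lem:allSinks}, and that each $R^s(x)$ bends only at vertices. For \emph{Phase 1} I would run Algorithm~\ref{alg:alg-3} once for every boundary index $b=1,\ldots,n$: by Lemma~\ref{lem:upperEnv}(b) this produces $\max_{s\in{\cal S}^*_b}R^s(x)$ in $O(|{\cal S}^*_b|\log n)$ time, and by Lemma~\ref{lem:upperEnv}(a) the output is an $x$-monotone piecewise-linear function with $O(|{\cal S}^*_b|+n)$ segments.

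Next I would carry out the summation over $b$. Using $\sum_{b=1}^n|{\cal S}^*_b|=|{\cal S}^*|=O(n^2)$ from Lemma~\ref{lem:minsumsink2}(b), the total \emph{Phase 1} running time is $O(\sum_{b=1}^n|{\cal S}^*_b|\log n)=O(n^2\log n)$, and the total number of segments produced is $O(\sum_{b=1}^n(|{\cal S}^*_b|+n))=O(n^2+n\cdot n)=O(n^2)$. Thus after \emph{Phase 1} I hold $O(n)$ upper envelopes, each an $x$-monotone piecewise-linear function, comprising $O(n^2)$ segments in aggregate.

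For \emph{Phase 2} I would compute the upper envelope of these $O(n)$ functions by balanced divide-and-conquer. Merging two $x$-monotone piecewise-linear functions with $m_1$ and $m_2$ segments is a simultaneous left-to-right sweep: on each maximal interval over which both are linear the two graphs cross at most once, so a merge costs $O(m_1+m_2)$ time and again returns an $x$-monotone piecewise-linear function. With $O(\log n)$ merge levels and total input size $O(n^2)$, the intended conclusion is that the final envelope $\max_{s\in{\cal S}^*}R^s(x)$ has $O(n^2)$ segments and is computed in $O(n^2\log n)$ time; the minmax regret sink $\mu^*$ is then the lowest point of this envelope, found by a single scan in $O(n^2)$ time.

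I expect the main obstacle to be the \emph{Phase 2} complexity analysis, specifically certifying that the pairwise merges do not inflate the segment count beyond $O(n^2)$ across the $O(\log n)$ levels. A purely generic upper-envelope-of-piecewise-linear-functions argument risks a per-merge constant-factor blow-up (and, in the worst case, an inverse-Ackermann factor), which would jeopardize the clean $O(n^2)$ bound. To close this gap I would exploit the structure established earlier: each $R^s(x)$ bends only at the $n$ vertices, so on any single edge every function is a single line and new breakpoints can arise only from crossings, while Lemma~\ref{lem:increasing} and Lemma~\ref{lem:crossing2} already bound how a pair of regret functions may cross. This lets me bound, per edge, the number of crossings introduced during the merges and thereby argue that the aggregate size stays $O(n^2)$ at every level, keeping both the envelope complexity and the running time at the claimed bounds.
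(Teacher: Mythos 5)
Your \emph{Phase 1} coincides with the paper's: run Algorithm~\ref{alg:alg-3} once per boundary index, invoke Lemma~\ref{lem:upperEnv}, and sum $\sum_b |{\cal S}^*_b|\log n = O(n^2\log n)$ for time and $\sum_b(|{\cal S}^*_b|+n)=O(n^2)$ for segments. \emph{Phase 2} is where you genuinely diverge: the paper simply feeds the $O(n^2)$ segments produced by Phase~1 into Hershberger's $O(m\log m)$ upper-envelope algorithm as a black box, whereas you merge the $n$ per-$b$ envelopes by balanced divide-and-conquer with linear-time sweeps. Both yield $O(n^2\log n)$ time, but your route, pushed through correctly, actually certifies the $O(n^2)$ segment count more cleanly than the paper does: on any fixed edge each $R^s(x)$ is a single line, so every intermediate envelope restricted to that edge is an upper envelope of lines and hence convex, and the pointwise maximum of two such functions on that edge has at most the sum of their piece counts; summing over edges shows the total segment count never grows from one merge level to the next, so it stays $O(n^2)$ throughout and in the final envelope. (By contrast, the generic bound for the upper envelope of $m$ segments is $O(m\,\alpha(m))$, so the paper's bare citation of Hershberger does not literally deliver the stated $O(n^2)$ segment bound, only the time bound.) One correction to your closing argument: you cannot lean on Lemma~\ref{lem:increasing} or Lemma~\ref{lem:crossing2} in Phase~2, since both are stated only for pairs $s,s'\in{\cal S}^*_b$ sharing the \emph{same} boundary vertex; no single-crossing property is established for scenarios with different boundary indices, which are exactly the pairs that meet during Phase~2 merges. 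Fortunately you do not need those lemmas there --- the per-edge ``everything is an upper envelope of lines, hence convex'' observation you also mention is the one that closes the gap.
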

Hershberger~\cite{hershberger1989} showed that the upper envelope of
$m$ line segments can be computed in $O(m\log m)$ time.
We can use his mothod to compute the global upper envelope in $O(n^2\log n)$ time.
So far we didn't pay any attention to the spikes at vertices.
Divide the problem in two subproblems: optimal sink is on an edge, and at a vertex.
Compare the two solutions and pick the better one.
In addition to Lemma~\ref{lem:bendingPoints}, we should evaluate the maximum cost at each vertex.
The minmax regret sink is at the point with the minimum of these maximum costs.
Corollary~\ref{cor:s0sMclusters} and
Lemmas~\ref{lem:minsumsink2}, \ref{lem:allSinks} and \ref{lem:bendingPoints} imply our main result.
\begin{theorem}
The minmax regret sink on a dynamic path network can be computed in $O(n^2\log n)$ time.
\end{theorem}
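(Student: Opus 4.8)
The plan is to combine the three main lemmas already proved and then resolve the one point they leave untreated, the downward spikes of $\Phi^s$ at the vertices. First I would pass from the continuum ${\cal S}$ to the finite family ${\cal S}^*$. By Lemma~\ref{lem:minsumsink2}(a) every $s\in{\cal S}$ is dominated at every point by some member of ${\cal S}^*$, and since ${\cal S}^*\subseteq{\cal S}$ this yields $R_{max}(x)=\max_{s\in{\cal S}}R^s(x)=\max_{s\in{\cal S}^*}R^s(x)$ for all $x\in P$. Hence it suffices to minimize the upper envelope of the $O(n^2)$ regret functions $\{R^s(x)\mid s\in{\cal S}^*\}$. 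I would first run the preprocessing of Corollary~\ref{cor:s0sMclusters} and build ${\cal S}^*$ (Lemma~\ref{lem:minsumsink2}(b)), both in $O(n^2)$ time, and then compute all sinks $\mu^s$ and the values $\Phi^s(\mu^s)$ for $s\in{\cal S}^*$ in $O(n^2\log n)$ time by Lemma~\ref{lem:allSinks}; these are exactly the data needed to describe each $R^s(x)=\Phi^s(x)-\Phi^s(\mu^s)$.

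Since $R_{max}$ is piecewise linear on the open edges but drops at every vertex, I would locate $\mu^*$ by treating the edge interiors and the vertices separately and keeping the better outcome. For the edge interiors, Lemma~\ref{lem:bendingPoints} gives the upper envelope $\max_{s\in{\cal S}^*}R^s(x)$ with $O(n^2)$ linear segments in $O(n^2\log n)$ time, and a single scan of its $O(n^2)$ breakpoints returns the lowest value attained off the vertices in $O(n^2)$ further time. For the vertices, I would compute $R_{max}(v_i)=\max_{s\in{\cal S}^*}R^s(v_i)$ using the true, spike-corrected values $\Phi^s(v_i)$, each of which Lemma~\ref{lem:costAtv_i}(a) supplies in constant time, and then take $\min_i R_{max}(v_i)$. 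The minmax regret sink $\mu^*$ is whichever of the best interior point and the best vertex has the smaller max regret, and this final comparison costs $O(n)$.

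The step I expect to be the main obstacle is the vertex case, because a direct evaluation of $R^s(v_i)$ over all $O(n^2)$ scenarios and all $n$ vertices would cost $O(n^3)$ and destroy the bound. To avoid this I would process the vertices group by group, one ${\cal S}^*_b$ at a time, mirroring Phase~1 of Section~\ref{sec:regret}. Within a fixed $b$ the quantity $\Phi^{s'}(x)-\Phi^s(x)$ is monotone on each side of $v_b$ (Lemma~\ref{lem:increasing}), and the spike at a vertex $v_i$ lowers each $R^s(v_i)$ only by the intra and extra cost of the cluster carrying $v_i$ under $s$, which is a local quantity already maintained while constructing the $\Delta_{R,k}$ in Section~\ref{sec:Phivi}. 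Exploiting this monotone structure together with the precomputed spike corrections should let me obtain $\max_{s\in{\cal S}^*_b}R^s(v_i)$ for all $i$ in $O((|{\cal S}^*_b|+n)\log n)$ time; summing over $b$ and then taking $\max_b$ at each vertex in $O(n^2)$ time keeps the vertex case within $O(n^2\log n)$. Collecting the costs of all steps, and invoking Lemmas~\ref{lem:minsumsink2}, \ref{lem:allSinks} and~\ref{lem:bendingPoints}, gives the claimed $O(n^2\log n)$ bound.
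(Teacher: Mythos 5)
Your proposal follows essentially the same route as the paper: restrict to the pseudo-bipartite scenarios ${\cal S}^*$ via Lemma~\ref{lem:minsumsink2}, compute all sinks $\mu^s$ with Lemma~\ref{lem:allSinks}, build the edge-interior upper envelope with Lemma~\ref{lem:bendingPoints}, handle the downward spikes at vertices as a separate subproblem, and take the better of the two answers. You actually elaborate the vertex subproblem (which the paper dispatches in one line) more carefully than the paper does, and your plan to process it per ${\cal S}^*_b$ using the monotonicity of Lemma~\ref{lem:increasing} together with the constant-time evaluations of Lemma~\ref{lem:costAtv_i} is a reasonable way to keep that step within the $O(n^2\log n)$ bound.
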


%%%%
\section{Conclusion}
We presented an $O(n^2\log n)$ time algorithm for finding the minmax regret aggregate time sink
on dynamic path networks with uniform edge capacities,
which improves upon the previously most efficient $O(n^3)$ time algorithm in
\cite{higashikawa2017a}.
This was achieved by two novel methods.
One was used to compute 1-sinks under the $O(n^2)$ pseudo-bipartite scenarios in amortized $O(\log n)$ time
per scenario, 
and the other was used to compute the upper envelope of $O(n^2)$ regret functions in $O(n^2\log n)$ time.
Note that $O(n^2)$ regret functions have $O(n^3)$ linear segments.
Future research topics include solving the minmax regret problem for aggregate time sink for more general
networks such as trees.
%\bibliographystyle{../../../splncs03}
%\bibliography{../../refs}

%%%%%
\section*{Appendix}

%%%%
\subsection{Proof of Lemma~\ref{lem:increasing}}
\noindent
{\bf Lemma 14}
{\em Let $s, s' \in {\cal S}^*_b$ be two scenarios such that and $s\lessdot s'$.
As $x$ moves to the right,
the difference $D(x)=\Phi^{s'}(x) -\Phi^{s}(x)$ increases monotonically for $v_b \preceq x \preceq v_n$,
and decreases monotonically for $v_1 \preceq x\preceq v_b$.}

\begin{proof}
Without loss of generality, we assume that $v_b \preceq x \preceq v_n$,
since essentially the same proof works if $v_1 \preceq x\preceq v_b$.
Let us first consider the extra cost.
If the sum of the vertex weights on the left side of $x$ is larger than that on the right side,
then the extra cost component of $\Phi^{s'}(x)$ grows faster than that of $\Phi^{s}_R(x)$,
Otherwise,
it decreases more slowly.

We now consider the intra costs.
They do not change as long as $x$ moves on the same edge,
including the vertex at its right end.
So we assume that $x$ moves across a vertex, $v_k$,
as illustrated in Fig.~\ref{fig:clusters3a},
where $s$ and $s'$ are two scenarios such that $s \lessdot s'$ and
the both have the same boundary vertex $v_b$.
Let $v_b\in C^s_{L,j}(v_j)$ hence $v_b\in C^{s'}_{L,j}(v_j)$.
Let $v_i$ be the front vertex of the ${\cal L}^s_x$-cluster immediately to the left
of $C^s_{L,j}(v_j)$.
\begin{figure}[ht]
\centering
\includegraphics[width=7.2cm]{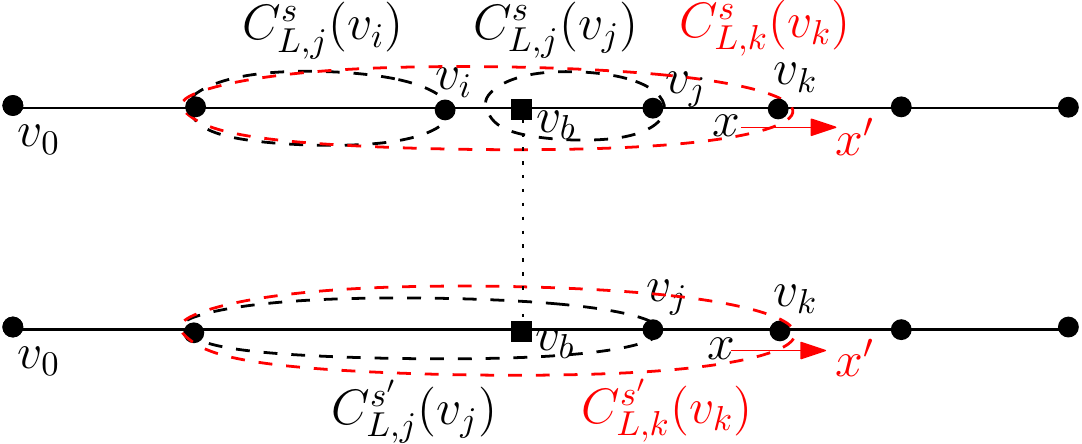}
\caption{${\cal L}^s$-clusters and ${\cal L}^{s'}$-clusters w.r.t. $x$ are shown in dashed black ovals,
and ${\cal L}^s$-clusters and ${\cal L}^{s'}$-clusters w.r.t. $x'$ are shown in dashed red ovals.
}
\label{fig:clusters3a}
\end{figure}
We compare the increase in $\Phi^{s'}(x)$ with that in $\Phi^s(x)$,
as $x$ moves past $v_k$ to $x'$,
and show that the increase in $\Phi^s(x)$ is smaller than that in $\Phi^{s'}(x)$.
Clearly $D(x)$ is the smallest when $\Phi^s(x)$ increases as much as possible
and $\Phi^{s'}(x)$ increases as little as possible,
where we consider a decrease as a negative increase.
This situation happens,
when the move $x\rightarrow x'$ causes the merger of ${\cal L}^s$-clusters,
which implies $d(v_j, v_k)\tau< \underline{w}_k/c$,
while it causes the merger of $v_k$ only to an existing ${\cal L}^{s'}$-cluster.\footnote{Note that
if $v_k$ doesn't merge to the cluster to its left under $s$ then it doesn't merge under $s'$ either.
}
Since $C^s_{L,j}(v_i)$ and $C^s_{L,j}(v_j)$ are two separate clusters,
we have 
\begin{equation}\label{eqn:nomerge}
d(v_i,v_j)\tau > \lambda(C^s_{L,j}(v_j))/c.
\end{equation} 
The part of $\Phi^s(x)$ that is affected by the move is
\begin{eqnarray}\label{eqn:costs1a}
\Phi^s(x)&:& \lambda(C^s_{L,j}(v_j)) d(v_j, x)\tau  + \frac{\lambda(C^s_{L,j}(v_j))(\lambda(C^s_{L,j}(v_j))+1)}{2c}\nonumber\\
		&+& \lambda(C^s_{L,i}(v_i)) d(v_i, x)\tau  + \frac{\lambda(C^s_{L,i}(v_i))(\lambda(C^s_{L,i}(v_i))+1)}{2c}. \nonumber
\end{eqnarray}
Since $C^s_{L,i}(v_i)$ and $C^s_{L,j}(v_j)$ are merged into $C^s_{L,k}(v_k)$ by assumption,
we have
\begin{equation}
d(v_i,v_k)\tau \leq \{\lambda(C^s_{L,j}(v_j)) + \underline{w}_k\}/c, 
\end{equation} 
and the part of $\Phi^{s'}(x)$ that is affected by the move is
\begin{eqnarray}\label{eqn:costs1b}
\Phi^s(x') &&: \lambda(C^s_{L,k}(v_k)) d(v_k, x')\tau \nonumber\\  
    &&+ \frac{\{\lambda(C^s_{L,i}(v_i)) + \lambda(C^s_{L,j}(v_j))+\underline{w}_k\}\{\lambda(C^s_{L,i}(v_i)+ \lambda(C^s_{L,j}(v_j))+\underline{w}_k+1\}}{2c},\nonumber
\end{eqnarray}
where
$\lambda(C^s_{L,k}(v_k)) =\lambda(C^s_{L,i}(v_i)) + \lambda(C^s_{L,j}(v_j))+\underline{w}_k$.
We now compute the increase
\begin{eqnarray}\label{eqn:costdiff1}
\Phi^s(x') - \Phi^s(x) &=& \lambda(C^s_{L,k}(v_k)) d(v_k, x')\tau - \lambda(C^s_{L,i}(v_i)) d(v_i, x)\tau \nonumber\\
 &-& \lambda(C^s_{L,j}(v_j)) d(v_j, x)\tau +  \lambda(C^s_{L,i}(v_i))\lambda(C^s_{L,j}(v_j))/c\nonumber\\
  &+& \{\lambda(C^s_{L,i}(v_i))+\lambda(C^s_{L,j}(v_j))\}\underline{w}_k/c
+ \underline{w}_k(\underline{w}_k + 1)/2c. 
\end{eqnarray}
Similarly, we have under $s'$,
\begin{eqnarray}\label{eqn:costs2}
\Phi^{s'}(x)&=& \lambda(C^{s'}_{L,j}(v_j)) d(v_j, x)\tau  + \frac{\lambda(C^{s'}_{L,j}(v_j))(\lambda(C^{s'}_{L,j}(v_j))+1)}{2c}\nonumber\\
\Phi^{s'}(x')&=& \lambda(C^{s'}_{L,k}(v_k)) d(v_k, x')\tau  
    + \frac{(\lambda(C^{s'}_{L,j}(v_j)) +\underline{w}_k)(\lambda(C^{s'}_{L,j}(v_j)) +\underline{w}_k+1)}{2c},\nonumber
\end{eqnarray}
where $ \lambda(C^{s'}_{L,k}(v_k)) =\lambda(C^{s'}_{L,j}(v_j)) +\underline{w}_k$,
and the increase is
\begin{eqnarray}\label{eqn:costdiff2}
\Phi^{s'}(x')&-&\Phi^{s'}(x)= \lambda(C^{s'}_{L,k}(v_k)) d(v_k, x')\tau - \lambda(C^{s'}_{L,j}(v_j)) d(v_j, x)\tau\nonumber\\
&+&  \lambda(C^{s'}_{L,j}(v_j))\underline{w}_k/c + \underline{w}_k(\underline{w}_k + 1)/2c.
\end{eqnarray}
We clearly have $\lambda(C^{s'}_{L,k}(v_k)) >\lambda(C^s_{L,k}(v_k))$,
and (\ref{eqn:nomerge}) implies $d(v_i, x)\tau>\lambda(C^s_{L,j}(v_j))$,
since $v_j \prec x$.
The assumption that $v_k$ is merged into $C^s_{L,x}(v_j)$ and $C^{s'}_{L,x}(v_j)$
implies $d(v_j, v_k)\tau < \underline{w}_k/c$ for $v_j\prec x\prec v_k$.
We conclude that
\[
\{\Phi^{s'}(x')-\Phi^{s'}(x)\} - \{\Phi^s(x')-\Phi^s(x)\} > 0,
\]
when $x\prec v_k \prec x'$.
This is valid in particular if $x=v^-_k$ and $x'=v^+_k$,
where $v^-$ (resp. $v^+$) denote a point on the left (resp. right) of $v$ that is arbitrarily close to $v$.
It is clear that this relation also holds, if $v_k$ is not merged into $C^s_{L,j}(v_j)$ and $C^{s'}_{L,j}(v_j)$.
\end{proof}
\end{document}